\newcommand{\logicFont}[1]{\protect\ensuremath{\mathrm{#1}}\xspace}
\newcommand{\problemmaFont}[1]{\protect\ensuremath{\mathsf{#1}}\xspace}
\newcommand{\classFont}[1]{\protect\ensuremath{\mathsf{#1}}\xspace}
\newcommand{\ML}{\logicFont{ML}}
\newcommand{\PL}{\logicFont{PL}}
\newcommand{\QPL}{\logicFont{QPL}}
\newcommand{\FOprob}{\logicFont{FO}(\perp\!\!\!\perp_{\rm c}, \approx)}
\newcommand{\esofu}{\logicFont{ESOf}_{[0,1]}}
\newcommand{\esof}{\logicFont{ESOf}_{[0,1]}(\logicFont{SUM},\times)}
\newcommand{\esofsum}{\logicFont{ESOf}_{[0,1]}(\logicFont{SUM})}
\newcommand{\pind}{\perp\!\!\!\perp}
\newcommand{\cpind}{\perp\!\!\!\perp_{\rm c}}
\newcommand{\upind}{{}_u\hspace{-1.5mm}\pind}
\newcommand{\fpind}{{}_\forall\hspace{-1.5mm}\pind}
\newcommand {\pci}[3] {#2~\!\!\pind_{#1}\!\!~#3}
\newcommand {\pmi}[2] {#1~\!\!\pind\!\!~#2}
\newcommand {\upmi}[2] {#1 ~\!\!\upind\!\!~ #2}
\newcommand {\fpmi}[2] {#1 ~\!\!\fpind\!\!~ #2}
\newcommand{\SUM}{\mathrm{SUM}}
\newcommand{\x}{\tuple x}
\newcommand{\y}{\tuple y}
\newcommand{\z}{\tuple z}
\newcommand{\pcixyz}{\pci{\tuple x}{\tuple y}{\tuple z}}
\newcommand{\pmixy}{\pmi{\tuple x}{\tuple y}}
\newcommand{\modelsr}{\models^{\geq 0}}
\newcommand{\modelsd}{\models^{[0,1]}}
\newcommand{\Th}{\problemmaFont{Th}}
\newcommand{\TOWER}{\classFont{TOWER}(poly)}
\newcommand{\PSPACE}{\classFont{PSPACE}}
\newcommand{\EXPSPACE}{\classFont{EXPSPACE}}
\newcommand{\twoEXPSPACE}{\classFont{2\text-EXPSPACE}}
\newcommand{\AEXPTIME}{\classFont{AEXPTIME}}
\newcommand{\AtwoEXPTIME}{\classFont{2\text-AEXPTIME}}
\newcommand{\tuple}[1]{\vec{#1}}
\newcommand{\Dom}{\operatorname{Dom}}
\newcommand{\ran}{\operatorname{Ran}}
\newcommand{\ar}{\operatorname{ar}}
\newcommand{\Fr}{\operatorname{Fr}}
\newcommand{\A}{\mathfrak{A}}
\newcommand{\Z}{\mathbb{Z}}
\newcommand{\Po}{\mathcal{P}}
\newcommand{\calL}{\mathcal{L}}
\newcommand{\calC}{\mathcal{C}}
\newcommand{\M}{\mathcal{M}}
\newcommand{\X}{\mathbb{X}}
\newcommand{\Y}{\mathbb{Y}}
\renewcommand{\a}{\alpha}
\renewcommand{\b}{\beta}
\def\dep{=\!\!}
\newcommand{\sub}{\subseteq}
\newcommand{\approxt}{\approx^*}
\newcommand{\kcup}{\sqcup_{k}}
\newcommand{\qcup}[1]{\sqcup_{#1}}
\newcommand{\Var}{\mathrm{Var}}
\newcommand{\negg}{\sim\hspace{-1mm}}
\newcommand{\Varfo}{\mathrm{Var_{1}}}
\newcommand{\Varso}{\mathrm{Var_{2}}}
\def\dep{=\!\!}
\newcommand{\FO}{{\rm FO}}
\newcommand{\ESO}{{\rm ESO}}
\newcommand{\f}{\phi}
\newcommand{\const}[1]{=\!\!(#1)}
\newcommand{\repeatproposition}[1]{%
  \begingroup
  \renewcommand{\theproposition}{\ref{#1}}%
  \expandafter\expandafter\expandafter\proposition
  \csname repproposition@#1\endcsname
  \endproposition
  \endgroup
}
\xdef\csname repproposition@#1\endcsname{%
    \unexpanded\expandafter{\BODY}%
  }%
\unskip\label{#1}\endproposition
\newcommand{\repeattheorem}[1]{%
  \begingroup
  \renewcommand{\thetheorem}{\ref{#1}}%
  \expandafter\expandafter\expandafter\theorem
  \csname reptheorem@#1\endcsname
  \endtheorem
  \endgroup
}
\xdef\csname reptheorem@#1\endcsname{%
    \unexpanded\expandafter{\BODY}%
  }%
\unskip\label{#1}\endtheorem
\newcommand{\repeatlemma}[1]{%
  \begingroup
  \renewcommand{\thelemma}{\ref{#1}}%
  \expandafter\expandafter\expandafter\lemma
  \csname replemma@#1\endcsname
  \endlemma
  \endgroup
}
\xdef\csname replemma@#1\endcsname{%
    \unexpanded\expandafter{\BODY}%
  }%
\unskip\label{#1}\endlemma
\begin{document}
\title{Facets of Distribution Identities in Probabilistic Team Semantics\thanks{The first and the third author were  supported by grant 308712, the fourth by grant 285203 of the Academy of Finland.}}
%
%
\author{Miika Hannula\inst{1}\orcidID{0000-0002-9637-6664}\and
\AA sa Hirvonen\inst{1}\orcidID{0000-0003-2149-4153} \and
Juha Kontinen\inst{1}\orcidID{0000-0003-0115-5154} \and
Vadim Kulikov\inst{1,2} \and
Jonni Virtema\inst{3}\orcidID{0000-0002-1582-3718}}%
\authorrunning{M. Hannula et al.}
%
\institute{University of Helsinki, Finland \email{\{miika.hannula,asa.hirvonen,juha.kontinen\}@helsinki.fi} \and
Aalto University, Finland,
\email{vadim.kulikov@iki.fi}
 \and
Hasselt University, Belgium,
\email{jonni.virtema@uhasselt.be}
}
\maketitle              
\begin{abstract}
  We study probabilistic team semantics which is a semantical
  framework allowing the study of logical and probabilistic
  dependencies simultaneously. We examine and classify the expressive
  power of logical formalisms arising by different probabilistic atoms
  such as conditional independence and different variants of marginal
  distribution equivalences. We also relate the framework to the
  first-order theory of the reals and apply our methods to the open
  question on the complexity of the implication problem of conditional
  independence.


%

\keywords{team semantics  \and probabilistic logic \and conditional independence}
\end{abstract}
%
%
%

\section{Introduction}

Team semantics, introduced by Hodges \cite{hodges97} and popularised by
V\"a\"an\"anen \cite{vaananen07}, shifts the focus of logics away from
assignments as the primitive notion connected to satisfaction. In team
semantics formulae are evaluated with respect to sets of assignments
(i.e., teams)
as opposed to single assignments of Tarskian semantics. During the last decade the research
on team semantics has flourished, many logical
formalisms have been defined, 
and surprising
connections to other fields identified.
In particular, several promising application areas of team semantics have been identified recently.
%
Krebs et al. \cite{kmvz18} developed  a team based approach to linear temporal logic  for the verification of information flow properties.
In applications to database theory, a team corresponds exactly to a database table (see, e.g., \cite{HannulaK16}).
Hannula et al. \cite{HannulaKV18} introduced a framework that extends the connection of team semantics and database theory to polyrelational databases and data exchange. 

The focus of this article is probabilistic team semantics which connects team based logics to probabilistic dependency notions. Probabilistic team semantics is built compositionally upon the notion of a probabilistic team, that is, a probability distribution over variable assignments. While the
first ideas of probabilistic teams trace back to the works of
Galliani \cite{galliani08} and Hyttinen et al. \cite{Hyttinen15b}, the systematic study of the topic was
initiated and further continued by Durand et al. in \cite{DurandHKMV18,HKMV18}. It is worth noting that in \cite{EPTCS286.2}
  so-called causal teams have been introduced to  logically model causality and interventions.
Probabilistic team semantics has also a close connection to the area of  metafinite model theory \cite{GradelG98}. 
In metafinite model theory, finite structures are extended with an another (infinite) domain sort such as the real numbers (often with arithmetic) and with weight functions that work as a bridge between the two sorts. This approach provides an elegant way to model  weighted graphs and other structures that refer to infinite structures. The exact relationship between   probabilistic team semantics and  logics over metafinite models  as well as with probabilistic databases of \cite{Cavallo:1987} will be a topic of future research.

The starting point of this work comes from \cite{HKMV18} in which probabilistic team semantics was defined following the lines of \cite{galliani08}.  The main theme in \cite{HKMV18} was to characterize logical formalisms in this framework in terms of existential second-order logic. Two main probabilistic dependency atoms were examined. The probabilistic conditional independence atom $\pcixyz$ states that the two variable tuples $\tuple y$ and $\tuple z$ are independent given the third tuple $\tuple x$. The marginal identity atom $\tuple x \approx \tuple y$ states that the marginal distributions induced from the two tuples  $\tuple x$ and $\tuple y$ (of the same length) are identical. The extension of first-order logic with these atoms ($\FO(\cpind,\approx)$) was then shown to correspond to a two-sorted variant of existential second-order logic that allows a restricted access to arithmetical operations for numerical function terms. What was left unexamined were the relationships between different logical formalisms in probabilistic team semantics. In fact, it was unknown whether there are any meaningful probabilistic dependency notions such that the properties definable with one notion are comparable to those definable with another. 

In this article we study the relative expressivity of first-order logic with probabilistic conditional independence atoms ($\FO(\cpind)$) and with marginal identity atoms ($\FO(\approx)$). The logic
$\FO(\approx)$ is a probabilistic variant of \emph{inclusion logic} that is strictly less expressive than \emph{independence logic}, after which $\FO(\cpind)$ is modelled \cite{galliani12,gradel10}.
In addition, we examine $\FO(\approx^*)$ which is another extension defined in terms of so-called marginal distribution equivalence. The \emph{marginal distribution equivalence atom} $\tuple x\approx^* \tuple y$ for two variable tuples $\tuple x$ and $\tuple y$ (not necessarily of the same length) relaxes the truth condition of the marginal identity atom in that the two distributions induced from $\tuple x $ and $\tuple y$  are required to determine the same multisets of probabilities. The aforementioned open question is now answered in the positive. The logics mentioned above are not only comparable, but they form a linear expressivity hierarchy: $\FO(\approx) < \FO(\approxt) \leq \FO(\pind_{\rm c})$. 
 We also show that $\FO(\approx)$ enjoys a union closure property that is a generalization of the union closure property of inclusion logic, and that  conditional independence atoms $\pcixyz$ can be defined with an access to only marginal independence atoms $\pmi{\tuple x}{\tuple y}$ between two variable tuples. Furthermore, we show that, surprisingly, $\FO(\approxt)$ corresponds to $\FO(\approx, \dep(\cdot))$, where $\dep(\cdot)$ refers to the dependence atom defined as  a declaration of functional dependence over the support of the probabilistic team. The question whether  $\FO(\approx, \dep(\cdot))$  is strictly less expressive than  $\FO(\cpind)$ is left as an open question; in team semantics the corresponding logics are known to be equivalent. The above findings look outwardly very similar to many results in team semantics.
 However, it is important to note that, apart perhaps from the union closure property, the results of this paper base on entirely new ideas and do not recycle old arguments from the team semantics context.

We also investigate (quantified) propositional logics with probabilistic team semantics. By connecting these logics to the arithmetic of the reals  we show upper bounds for their associated computational problems. Our results suggest that the addition of probabilities to team semantics entails an increase in the complexity. Satisfiability of propositional team logic ($\PL(\sim)$), i.e., propositional logic with classical negation is in team semantics known to be complete for alternating exponential time with polynomially many alternations  \cite{HannulaKVV18}. Shifting to probabilistic team semantics analogous problems are here  shown to enjoy double exponential space upper bound. This is still lower than the complexity of satisfiability for modal team logic ($\ML(\sim)$) in team semantics, known to be complete for the non-elementary complexity class $\TOWER$ which consists of problems solvable in time restricted by some tower of exponentials of polynomial height \cite{Luck18}. One intriguing consequence of our translation to real arithmetic is that the implication problem of conditional independence statements over binary distributions is decidable in exponential space. The decidability of this problem is open relative to  all discrete probability distributions \cite{NiepertGSG13}.




\section{Preliminaries}

First-order variables are denoted by $x,y,z$ and tuples of first-order
variables by $\vec{x},\vec{y},\vec{z}$. By $\Var(\tuple x)$ we denote
the set of variables that appear in the variable sequence $\tuple x$.
The length of the tuple $\vec{x}$ is denoted by
$\lvert
\vec{x}\rvert$. 
A \emph{vocabulary} $\tau$ is a set of relation symbols and function
symbols with prescribed arities. We mostly denote relation symbols by
$R$ and function symbols by $f$, and the related arities by $\ar(R)$
and $\ar(f)$, respectively.
The closed interval of real numbers between $0$ and $1$ is denoted by
$[0,1]$.  Given a finite set $A$, a function $f\colon A\to[0,1]$ is
called a \emph{(probability) distribution} if $\sum_{s\in
  A}f(s)=1$. In addition, the empty function is a \emph{distribution}.

The probabilistic logics investigated in this paper are extensions of
first-order logic $\FO$ over a vocabulary $\tau$ given by the grammar rules:
\[
\phi ::= x=y \mid x\neq y \mid R(\vec{x}) \mid \neg R(\vec{x}) \mid
(\phi\land\phi) \mid (\phi\lor\phi) \mid \exists x\phi \mid \forall x
\phi,
\]
where $\vec{x}$ is a tuple of first-order variables and $R$ a relation
symbol from $\tau$.

Let $D$ be a finite set of first-order variables and $A$ be a nonempty
set. A function $s\colon D \to A$ is called an \emph{assignment}. 
 For a variable $x$ and $a\in A$, the
assignment $s(a/x)\colon D\cup\{x\} \rightarrow A$ is equal to $s$
with the exception that $s(a/x)(x)=a$.
  A $\emph{team}$
$X$ is a finite set of assignments from $D$ to
$A$. The set $D$ is called the \emph{domain} of $X$ (written $\Dom(X)$) and the set $A$ the \emph{range} of $X$ (written $\ran(X)$). Let $X$ be a team with range $A$, and let
 $F\colon X\to \Po(A)\setminus \{\emptyset\}$ be a function.  We
 denote by $X[A/x]$ the modified team
 $\{s(a/x) \mid s\in X, a\in A\}$, and by $X[F/x]$ the team
 $\{s(a/x)\mid s\in X, a\in F(s)\}$. 
%
A $\emph{probabilistic team}$ $\X$ is a distribution
$\X\colon X\rightarrow [0,1]$. 
 Let $\A$ be a $\tau$-structure and $\X:X\to [0,1]$ a probabilistic team such that the domain of
$\A$ is the range of $X$.  Then we say that $\X$ is a
probabilistic team of $\A$.  In the following, we will define two
notations $\X[A/x]$ and $\X[F/x]$. Let $\X\colon X\to[0,1]$ be a
probabilistic team, $A$ a finite non-empty set, $p_A$ the set of all probability
distributions $d\colon A \to [0,1]$, and $F\colon X\to p_A$ a
function.
%
We denote by $\X[A/x]$ the probabilistic team $X[A/x] \to [0,1]$ such
that
\[
\X[A/x](s(a/x)) = \sum_{\substack{t\in X \\ t(a/x) = s(a/x)}} \X(t)
\cdot \frac{1}{\lvert A \rvert},
\]
for each $a\in A$ and $s\in X$. Note that if $x$ does not belong to the domain of $X$
then the righthand side of the above equation is simply
$\X(s) \cdot \frac{1}{\lvert A \rvert}$.  By $\X[F/x]$ we denote the
probabilistic team $X[A/x] \to [0,1]$ defined such that
\[
\X[F/x](s(a/x)) = \sum_{\substack{t\in X \\ t(a/x) = s(a/x)}} \X(t)
\cdot F(t)(a),
\]
for each $a\in A$ and $s\in X$. Again if $x$ does not belong to the domain of $X$,
$\sum$ can be dropped from the above equation.

If $\Y\colon X\to [0,1]$ and $\Z\colon X \to [0,1]$ are probabilistic
teams and $k\in [0,1]$, then we write $\Y\kcup \Z$ for the $k$-scaled union of $\Y$ and
$\Z$, that is, the probabilistic team $\Y\kcup \Z\colon X\to [0,1]$
defined such that $(\Y\kcup \Z)(s)=k\cdot \Y(s) + (1-k)\cdot \Z(s)$
for each $s\in X$.

%
%
%
We may now define probabilistic team semantics for first-order
formulae. The definition is the same as in \cite{HKMV18}. The only
exception is that it is here applied to probabilistic teams that have
real probabilities, whereas in \cite{HKMV18} rational probabilities were used.

\begin{definition}\label{def:probsem}
  Let $\A$ be a probabilistic $\tau$-structure over a finite domain
  $A$, and $\X\colon X \to [0,1]$ a proba\-bi\-listic team of
  $\A$. The satisfaction relation $\models_\X$ for first-order logic
  is defined as follows:
  
  \begin{tabbing}
    left \= $\A \models_{\X} (\psi \land \theta)$ \= $\Leftrightarrow$ \= $\forall s\in X: s(\tuple x) \in R^{\A}$\kill
    \> $\A \models_{\X} x=y$ \> $\Leftrightarrow$ \> $\text{for all } s\in X: \text{if }\X(s)> 0 \text{, then } s(x)=s(y)$\\ 
   \> $\A \models_{\X} x\neq y$ \> $\Leftrightarrow$ \> $\text{for all } s\in X: \text{if }\X(s)> 0 \text{, then } s(x) \not= s(y)$\\
    \> $\A \models_{\X} R(\tuple x)$ \> $\Leftrightarrow$ \> $\text{for all } s\in X: \text{if }\X(s)> 0 \text{, then } s(\tuple x) \in R^{\A}$\\ 
    \> $\A \models_{\X} \neg R(\tuple x)$ \> $\Leftrightarrow$ \> $\text{for all } s\in X: \text{if } \X(s)> 0 \text{, then } s(\tuple x) \not\in R^{\A}$\\
    \> $\A\models_{\X} (\psi \land \theta)$ \> $\Leftrightarrow$ \> $\A \models_{\X} \psi \text{ and } \A \models_{\X} \theta$\\
    \> $\A\models_{\X} (\psi \lor \theta)$ \> $\Leftrightarrow$ \> $\A\models_\Y \psi \text{ and } \A \models_\Z \theta \text{ for some  $\Y,\Z,k$}$ s.t.\ $\Y\kcup \Z = \X$\\
    \> $\A\models_{\X} \forall x\psi$ \> $\Leftrightarrow$ \> $\A\models_{\X[A/x]} \psi$\\
    \> $\A\models_{\X} \exists x\psi$ \> $\Leftrightarrow$ \> $\A\models_{\X[F/x]} \psi \text{ holds for some } F\colon X \to p_A $.
  \end{tabbing}
\end{definition}
Probabilistic team semantics is in line with Tarskian semantics for first-order formulae ($\models_s$):
\[
\A \models_{\X} \psi  \Leftrightarrow  \forall s\in X \text{ such that } \X(s)>0 : \A \models_s\psi. 
\]
In particular the non-classical semantics for negation is required for the above equivalence to hold.

In this paper we consider three probabilistic atoms: marginal
identity, probabilistic independence, and marginal distribution
equivalence atom. The first two were first introduced in the context of
multiteam semantics in \cite{DurandHKMV18}, and they extend the
notions of inclusion and independence atoms from team semantics
\cite{galliani12}.

We define $|\X_{\tuple x =\tuple a}|$ where $\tuple x$ is a tuple of
variables and $\tuple a$ a tuple of values, as 
\[|\X_{\tuple x =\tuple a}|:=\sum_{\substack{s(\tuple x)=\tuple
    a\\s\in X}} \X(s).\]
If $\phi$ is some first-order formula, then $|\X_{\phi}|$ is defined
analogously as the total sum of weights of those assignments in $X$
that satisfy $\phi$.

If $\tuple x,\tuple y$ are variable sequences of length $k$, then
$\tuple x \approx \tuple y$ is a \emph{marginal identity atom} with
the following semantics:
\begin{equation}\label{eq:marg}
  \A \models_{\X} \vec{x} \approx \vec{y} \Leftrightarrow \lvert {\X}_{\vec{x}=\vec{a}} \rvert =  \lvert {\X}_{\vec{y}=\vec{a}} \rvert \text{ for each $\vec{a}\in A^k$}.
\end{equation}
Note that the equality
$\vert{\X}_{\vec{x}=\vec{a}} \rvert = \lvert {\X}_{\vec{y}=\vec{a}}
\rvert$
in \eqref{eq:marg} can be equivalently replaced with
$\lvert{\X}_{\vec{x}=\vec{a}} \rvert \leq \lvert
{\X}_{\vec{y}=\vec{a}}\rvert$
since the tuples $\tuple a$ range over $A^k$ for a finite $A$ (see \cite[Definition 7]{DurandHKMV18} for details). Due to this alternative
formulation, marginal identity atoms were in
\cite{DurandHKMV18} called probabilistic
inclusion~atoms. 
Intuitively, the atom $\tuple x \approx \tuple y$
states that the distributions induced from $\tuple x$ and $\tuple y$
are identical.

The marginal distribution equivalence atom is defined in terms of
multisets of assignment weights. We distinguish multisets from sets by
using double wave brackets, e.g., $\{\{a,a,b\}\}$ denotes the multiset
$(\{a,b\},m)$ where $a$ and $b$ are given multiplicities $m(a)=2$ and
$m(b)=1$. If $\tuple x,\tuple y$ are variable sequences, then
$\tuple x \approx^* \tuple y$ is a \emph{marginal distribution
  equivalence atom} with the following semantics:
\begin{equation}\label{eq:marg}
  \A \models_{\X} \vec{x} \approx^* \vec{y} \Leftrightarrow \{\{\lvert {\X}_{\vec{x}=\vec{a}} \rvert > 0  \mid \tuple a \in A^{|\tuple x|}\}\}  
  = \{\{\lvert {\X}_{\vec{y}=\vec{b}} \rvert > 0 \mid \tuple b \in A^{|\tuple y|}\}\}.
\end{equation}
The next example illustrates the relationships between marginal distribution equivalence atoms and marginal identity atoms; the latter implies the former, but not vice versa.

\begin{figure}[t]
\centering
\begin{tabular}{@{\hspace{1.5em}}ccccc@{\hspace{1.5em}}}
\multicolumn{5}{c}{$\X$}\\\toprule
 $x$ & $y$ & $z$ &&  P  \\\midrule
 $a$ & $b$ & $c$  && $1/2$\\
 $b$ & $c$ & $b$ && $1/2$\\\bottomrule
\end{tabular}
\caption{A representation of a probabilistic team $\X$, for Example \ref{ex:atoms}, with domain $\{x,y,z\}$ that consists of two assignments whose probabilities are $1/2$.}\label{fig:ex1}
\label{fig:ex1}
\end{figure}

\begin{example}\label{ex:atoms}
Let $\X$ be the probabilistic team depicted in Figure \ref{fig:ex1}. The team $\X$ satisfies the atoms $xy\approxt y$, $x\approxt y$, $y\approxt z$, and $y\approx z$. The team $\X$ falsies the atom $x\approx y$, whereas $xy\approx y$ is not a well formed formula.
\end{example}

If $\tuple x,\tuple y,\tuple z$ are variable sequences, then $\pcixyz$
is a \emph{probabilistic conditional independence atom} with the
satisfaction relation defined as
\begin{align}
  \A\models_{\X} \pcixyz \label{def1}
\end{align}
if for all $s\colon\Var(\tuple x\tuple y\tuple z) \to A$ it holds that
\[
\lvert {\X}_{\tuple x\tuple y= s(\tuple x\tuple y)} \rvert \cdot
\lvert {\X}_{\tuple x\tuple z=s(\tuple x\tuple z)} \rvert = \lvert
{\X}_{\tuple x\tuple y\tuple z=s(\tuple x\tuple y\tuple z)} \rvert
\cdot \lvert {\X}_{\tuple x=s(\tuple x)} \rvert .
\]
Furthermore, we define \emph{probabilistic marginal independence atom}
$\pmixy$ as $\pci{\emptyset}{\tuple x}{\tuple y}$, i.e., probabilistic
independence conditioned by the empty tuple.

In addition to atoms based on counting or arithmetic operations, we
may also include all dependency atoms from the team semantics
literature. Let $\alpha$ be an atom that is interpreted in team
semantics, let $\A$ be a finite structure, and $\X:X\to [0,1]$ a
probabilistic team. We define $\A \models_\X \alpha$ if
$\A\models_{X^+} \alpha$, where $X^+$ consists of those assignments of
$X$ that are given positive weight by $\X$.  In this paper we will
discuss dependence atoms also in the context of probabilistic team
semantics. If $\tuple x,\tuple y$ are two variable sequences, then
$\dep(\tuple x,\tuple y)$ is a \emph{dependence atom} with team
semantics:
\begin{equation}\label{eq:marg}
  \A \models_{X} \dep(\tuple x,\tuple y) \Leftrightarrow s(\tuple x)= s'(\tuple x)\text{ implies } s(\tuple y)=s'(\tuple y)\text{ for all } s,s'\in X.
\end{equation}
A dependence atom of the form $\dep(\emptyset,\tuple x)$ is called a \emph{constancy atom}, written $\dep(\tuple x)$ in shorthand notation.
Dependence atoms can be expressed by using probabilistic independence
atoms. This has been shown for multiteams in
\cite{DurandHKMV18}, and the proof applies to probabilistic
teams.
\begin{proposition}[\cite{DurandHKMV18}]\label{prop:dep}
  Let $\A$ be a structure, $\X:X\to [0,1]$ a probabilistic team
  of $\A$, and $\tuple x$ and $\tuple y$ two sequences of
  variables. Then
  \(
  \A\models_\X \dep(\tuple x,\tuple y) \Leftrightarrow \A\models_\X
  \pci{\tuple x}{\tuple y}{\tuple y}.
  \)
\end{proposition}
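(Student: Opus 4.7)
The plan is to unwind both sides in terms of weighted sums over assignments and observe that the conditional independence atom $\pci{\tuple x}{\tuple y}{\tuple y}$ collapses to a very restrictive equation. Write $a(s) := \lvert\X_{\tuple x\tuple y = s(\tuple x\tuple y)}\rvert$ and $b(s) := \lvert\X_{\tuple x = s(\tuple x)}\rvert$. Since the sequence $\tuple y\tuple y$ only repeats variables, the constraint $\tuple y\tuple y = s(\tuple y\tuple y)$ is the same as $\tuple y = s(\tuple y)$, so $\lvert\X_{\tuple x\tuple y\tuple y = s(\tuple x\tuple y\tuple y)}\rvert = a(s)$. Hence $\A\models_\X \pci{\tuple x}{\tuple y}{\tuple y}$ says $a(s)^2 = a(s)\cdot b(s)$ for every $s$, equivalently $a(s) = 0$ or $a(s) = b(s)$. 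Since $a(s) \leq b(s)$ always holds by definition, this is the cleanest form to work with.

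For the direction ($\Rightarrow$), I would assume $\A\models_\X \dep(\tuple x,\tuple y)$, fix an arbitrary $s$, and argue that if $a(s) > 0$ then some $t\in X^+$ realises $t(\tuple x\tuple y) = s(\tuple x\tuple y)$. The dependence atom then forces every $t'\in X^+$ with $t'(\tuple x) = s(\tuple x)$ to satisfy $t'(\tuple y) = t(\tuple y) = s(\tuple y)$, so the sums defining $a(s)$ and $b(s)$ range over exactly the same positive-weight assignments, giving $a(s) = b(s)$.

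For the direction ($\Leftarrow$), I would argue by contrapositive: suppose the dependence atom fails, so there exist $s, s' \in X^+$ with $s(\tuple x) = s'(\tuple x)$ but $s(\tuple y)\neq s'(\tuple y)$. Take $\tuple a\tuple b := s(\tuple x\tuple y)$. Then $a(s) \geq \X(s) > 0$, while $b(s)$ sums over all assignments with $\tuple x$-value $\tuple a$ and in particular includes $s'$, which does not contribute to $a(s)$ because $s'(\tuple y)\neq\tuple b$. Hence $b(s) \geq a(s) + \X(s') > a(s) > 0$, contradicting the dichotomy $a(s)\in\{0,b(s)\}$ extracted above.

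There is no real obstacle here; the only mild care is with the repeated-variable convention in $\pci{\tuple x}{\tuple y}{\tuple y}$ and with remembering that $a(s), b(s)$ genuinely coincide with sums over $X^+$ since zero-weight assignments contribute nothing. The argument is essentially the one for multiteams in \cite{DurandHKMV18}, transferred to distributions by replacing cardinalities with total weights.
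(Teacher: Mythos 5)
Your proof is correct. Note that the paper does not actually supply a proof of this proposition: it is imported from \cite{DurandHKMV18}, where it is established for multiteams, with the remark that the argument carries over to probabilistic teams. Your write-up is precisely that carried-over argument: the key observation that $\pci{\tuple x}{\tuple y}{\tuple y}$ collapses to ``$a(s)=0$ or $a(s)=b(s)$'' (using $a(s)\leq b(s)$), the forward direction showing that under functional dependence the positive-weight assignments contributing to $a(s)$ and $b(s)$ coincide, and the contrapositive for the converse, where a witness $s'$ to the failure of dependence strictly inflates $b(s)$ above $a(s)>0$. Both directions are sound, and you correctly handle the two points that need care: the repeated-variable convention giving $\lvert\X_{\tuple x\tuple y\tuple y=s(\tuple x\tuple y\tuple y)}\rvert=a(s)$, and the fact that zero-weight assignments are invisible both to the weights and to the team-semantic reading of $\dep(\tuple x,\tuple y)$ via $X^+$.
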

Given a collection $C$ of atoms from
$\{\cpind,\pind,\approx,\approx^*,\dep(\cdot)\}$, we write $\FO(C)$
for the logic that extends $\FO$ with the atoms in $C$.

\begin{example}
Let $f_1, \ldots ,f_n,g$ be univariate distributions. Then $g$ is a \emph{finite mixture} of $f_1, \ldots ,f_n$ if it can be expressed as a convex combination of  $f_1, \ldots ,f_n$, i.e., if there are non-negative real numbers $r_1, \ldots ,r_n$ such that  $r_1+\ldots +r_n=1$ and $g(a)= \sum_{i=1}^n r_if_i(a)$. A probabilistic team $\X:X \to [0,1]$  gives rise to a univariate distribution $f_x(a) := |\X_{x=a}|$ for each variable $x$ from the domain of $X$. 
 The next formula expresses that the distribution $f_y$ is a finite mixture of the distributions $f_{x_1}, \ldots ,f_{x_n}$:
\[\exists qr \big[ \pmi{x_1\ldots x_n}{r}\wedge  \bigvee_{i=1}^n r=i \wedge  \bigwedge_{i=1}^n \exists x'r' \big(x_ir \approx x'r' \wedge [(q= i \vee r'= i) \to yq = x'r']\big)\big],
\]
where the indices $1, \ldots ,n$ are also thought of as distinct constants, and $(q= i \vee r'= i) \to yq = x'r'$ stands for $\neg  (q\neq i \wedge r'\neq i) \vee yq = x'r'$. The non-negative real numbers $r_i$ are represented by the weights of $r=i$ where $r$ is distributed independently of each $x_i$. The summand $r_if_{x_i}(a)$ is then represented by the weight of $x_ir=ai$ and $f_y(a)$ by the weight of $y=a$. The quantified subformula 
expresses that the former weight matches the weight of $yq=ai$, which implies that $f_y(a)$ is $r_1f_{x_1}(a) +\ldots +r_nf_{x_n}(a)$. 
\end{example}

\begin{example}
Probabilistic team semantics can be also used to model properties of data obtained from a quantum experiment  (adapting the approach of  \cite{Abrasmky_corr/abs-1007-2754}). Consider a probabilistic team $\X$ over variables $m_1,\dots,m_n, o_1,\dots ,o_n$. The intended interpretation of $\X(s)=r$ is that the joint probability that $s(m_i)$ was measured with outcome $s(o_i)$, for $1\leq i \leq m$, is $r$. In this setting many important properties of the experiment can be expressed using our formalism. For example the formula
\[
o_i \pind_{\vec{m}} (o_1,\dots,o_{i-1},o_{i+1},\dots, o_m)
\]
expresses a property called \emph{Outcome-Independence}; given the measurements $\vec{m}$, the outcome at $i$ is independent of the outcomes at other positions. The dependence atom $\dep(\vec{m}, \vec{o})$ on the other hand corresponds to a property called \emph{Weak-Determinism}. Moreover, if $\phi$ describes some property of hidden-variable models (Outcome-Independence, etc.), then the formula $\exists \lambda \phi$ expresses that the experiment can be explained by a hidden-variable model satisfying that property. 
\end{example}

The next example  relates probabilistic team semantics to  Bayesian networks. The example is an adaptation of an example discussed also in \cite{DurandHKMV18}.
\begin{example}\label{ex3}
Consider the Bayesian network $\mathbb{G}$ in Fig. \ref{network} that models beliefs about house safety using four Boolean random variables \texttt{thief}, \texttt{cat}, \texttt{guard} and \texttt{alarm}.  We refer to these variables by  $t,c,g,a$.  The dependence structure of a Bayesian network is characterized by the so-called local directed Markov property  stating  that each variable  is conditionally independent of its non-descendants given its parents. For our network $\mathbb{G}$ the only non-trivial independence given by this property is $ \pci{tc}{g}{a}$. Hence a joint distribution $P$ over  $t,c,g,a$ factorizes according to 
$\mathbb{G}$ if $\X$ satisfies $ \pci{tc}{g}{a}$. In this case $P$ can be factorized by
 \begin{equation}\label{exeq}
 P(t,c,g,a)=P(t)\cdot P(c\mid t)\cdot P(g\mid t,c)\cdot P(a\mid t,c)
 \end{equation}
  where, for instance, $t$ abbreviates either $\texttt{thief}=T$ or $\texttt{thief}=F$, and $P(c\mid t)$ is the probability of $c$ given $t$. The joint probability distribution (i.e., the team $\X$) can hence be stored as in Fig. \ref{network}.
Note that while  $\mathbb{G}$ expresses the independence statement $\pci{tc}{g}{a}$,  $\FOprob$-formulas  can be used to further refine the joint probability distribution as follows.
 Assume we have information suggesting that we may safely assume an $\FOprob$ formula $\phi$ on  $\X$:
\begin{itemize}
\item $\phi := t=F\to g=F$ indicates that \texttt{guard} never raises alarm in absence of \texttt{thief}. In this case the two bottom 
rows of the conditional probability distribution for \texttt{guard} become superfluous.
\item the assumption that $\phi$ is satisfied also exemplifies an interesting form of \emph{contex-specific} independence (CSI) that cannot be formalized by the usual  Bayesian networks (see, e.g.,  \cite{CoranderKH16b}). Namely, $\phi$ implies that  \texttt{guard} is independent of  \texttt{cat} in the context  $\texttt{thief}=F$. Interestingly such CSI statements can be formalized
utilizing the  disjunction of  $\FOprob$:
\[  t=T \vee   (t=F\wedge \pci{}{g}{c}).   \]
\item satisfaction of $\phi := tca\approx tcg$ would imply that \texttt{alarm} and \texttt{guard} have the same reliability for any given value of \texttt{thief} and \texttt{cat}. Consequently, the conditional distributions for \texttt{alarm} and \texttt{guard} are equal and one of the them could be removed.
\end{itemize}

\begin{figure}[t]
\begin{tabular}{ccc}
\begin{tabular}{cc}
\multicolumn{2}{c}{
\begin{tikzpicture}[->,>=stealth, shorten >=.5pt,auto,node distance=2cm,
 main node/.style={circle,draw,ellipse,minimum size=0.6cm,minimum width=15mm,draw,font=\sffamily\bfseries}]
\usetikzlibrary{arrows}
\usetikzlibrary{shapes}
  \node[main node]  (1)  {\texttt{thief}};
 \node[main node, right=75pt]   (2) [right of=1] {\texttt{cat}};
   \node[main node, below=15pt]   (3) [below right  of=1] {\texttt{guard}};
   \node[main node, below=15pt]   (4) [below  left of=2] {\texttt{alarm}};
\path[every node/.style={sloped,anchor=south,auto=false}]
    (1) edge (2)
    (1) edge (3)
    (1) edge (4)
    (2) edge (4)
    (2) edge (3);
\end{tikzpicture}
}\\
{}\\
\mbox{\hspace{1cm}
\begin{tabular}{cc}
\multicolumn{2}{c}{\texttt{thief}}\\\toprule
T&F\\\hline
$0.1$&$0.9$
\end{tabular}}
&
\mbox{\hspace{1cm}
\begin{tabular}{c|cc}
\multicolumn{2}{c}{\texttt{cat}}\\\toprule
\texttt{thief}&T&F\\\hline
T&$0.1$&$0.9$\\
F&$0.6$&$0.4$
\end{tabular}}
\end{tabular}
&
\hspace{1cm}
&
\begin{tabular}{c}
\mbox{
\begin{tabular}{c|cc}
\multicolumn{2}{c}{\texttt{guard}}\\\toprule
\texttt{thief,cat}&T&F\\\hline
TT&$0.8$&$0.2$\\
TF&$0.7$&$0.3$\\
FT&$0$&$1$\\
FF&$0$&$1$\\
\end{tabular}}\\

{}\\

\mbox{
\begin{tabular}{c|cc}
\multicolumn{2}{c}{\texttt{alarm}}\\\toprule
\texttt{thief,cat}&T&F\\\hline
TT&$0.9$&$0.1$\\
TF&$0.8$&$0.2$\\
FT&$0.1$&$0.9$\\
FF&$0$&$1$\\
\end{tabular}}\\
\end{tabular}

\end{tabular}
\caption{Bayesian network $\mathbb{G}$ and its related conditional distributions\label{network}
}
\end{figure}
\end{example}

The following locality property dictates that satisfaction of a
formula $\phi$ in probabilistic team semantics depends only on the
free variables of $\phi$. For this, we define the \emph{restriction}
of a team $X$ to $V$ as
$X\upharpoonright V=\{s\upharpoonright V\mid s\in X\}$ where
$s\upharpoonright V$ denotes the restriction of the assignment $s$ to
$V$. The restriction of a probabilistic team $\X:X\to [0,1]$ to $V$ is then
defined as the probabilistic team
$\Y\colon X\upharpoonright V \to [0,1]$ where
$\Y(s)= \sum_{s'\upharpoonright V= s} \X(s').$ The set of \emph{free
  variables} $\Fr(\phi)$ of a formula over probabilistic team semantics 
   is defined
recursively as in first-order logic; note that for any atom $\phi$,
$\Fr(\phi)$ consists of all variables that appear in $\phi$.
\begin{proposition}[Locality, \cite{HKMV18}]\label{prop:locality}
  Let $\phi(\tuple x)\in \FO(\cpind,\approx,\approx^*,\dep(\cdot))$ be
  a formula with free variables from $\tuple x=(x_1, \ldots ,x_n)$.
  Then for all structures $\A$ and probabilistic teams $\X:X\to [0,1]$
  where $ \{x_1, \ldots ,x_n\} \subseteq V\subseteq\Dom(X)$,
  \(\A\models_{\X} \phi \iff \A\models_{\X\upharpoonright V}\phi.\)
\end{proposition}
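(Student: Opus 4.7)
The plan is to prove the statement by induction on the structure of $\phi$. The crucial preliminary observation is that for any tuple $\tuple x$ of variables with $\Var(\tuple x)\subseteq V$, the marginal weight is preserved under restriction:
\[
|\X_{\tuple x = \tuple a}| = \sum_{\substack{s\in X\\ s(\tuple x)=\tuple a}}\X(s) = \sum_{\substack{s'\in X\upharpoonright V\\ s'(\tuple x)=\tuple a}} (\X\upharpoonright V)(s') = |(\X\upharpoonright V)_{\tuple x=\tuple a}|,
\]
because the fibers of the restriction map partition the sum. Since the satisfaction conditions of all the atoms $x=y$, $R(\tuple x)$, $\tuple x \approx \tuple y$, $\tuple x \approx^* \tuple y$, $\pcixyz$, and $\dep(\tuple x,\tuple y)$ (the last via $\pci{\tuple x}{\tuple y}{\tuple y}$ by Proposition~\ref{prop:dep}) are stated purely in terms of such marginal weights for variables occurring in the atom, the atomic base cases follow directly. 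The case $\phi\land\psi$ is immediate from the induction hypothesis since $\Fr(\phi\land\psi)=\Fr(\phi)\cup\Fr(\psi)$.

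For disjunction, I would establish the following lifting lemma: whenever $\Fr(\phi\lor\psi)\subseteq V$, the decompositions $\X = \Y \kcup \Z$ are in correspondence with the decompositions $\X\upharpoonright V = \Y' \sqcup_k \Z'$. The forward direction is the easy observation that restriction is linear, so $\X\upharpoonright V = (\Y\upharpoonright V)\sqcup_k(\Z\upharpoonright V)$. For the backward direction, given a decomposition $\X\upharpoonright V = \Y' \sqcup_k \Z'$, I would define $\Y,\Z$ on $X$ by spreading the weights proportionally along each fiber, namely for $s\in X$ with $(\X\upharpoonright V)(s\upharpoonright V) > 0$:
\[
\Y(s) := \X(s)\cdot\frac{\Y'(s\upharpoonright V)}{k\cdot(\X\upharpoonright V)(s\upharpoonright V)},\qquad \Z(s) := \X(s)\cdot\frac{\Z'(s\upharpoonright V)}{(1-k)\cdot(\X\upharpoonright V)(s\upharpoonright V)},
\]
with the obvious convention at degenerate points. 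A direct computation shows $\Y\kcup\Z = \X$, $\Y\upharpoonright V = \Y'$, and $\Z\upharpoonright V = \Z'$. Combined with the induction hypothesis applied to $\phi$ on $\Y,\Y'$ and to $\psi$ on $\Z,\Z'$, this yields the disjunction case.

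For the existential quantifier, given $F\colon X\to p_A$ witnessing $\A\models_{\X[F/x]}\psi$, I would define $F'\colon X\upharpoonright V\to p_A$ by the averaged choice
\[
F'(s')(a) := \frac{1}{(\X\upharpoonright V)(s')}\sum_{\substack{s\in X\\ s\upharpoonright V=s'}} \X(s)\cdot F(s)(a),
\]
and verify that $(\X\upharpoonright V)[F'/x] = \X[F/x]\upharpoonright(V\cup\{x\})$; the induction hypothesis applied to $\psi$ with free variables inside $V\cup\{x\}$ then gives the implication. The converse direction is easier: lift any $F'$ on $X\upharpoonright V$ to $X$ by $F(s):=F'(s\upharpoonright V)$. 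The universal case is handled analogously using the identity $\X[A/x]\upharpoonright(V\cup\{x\}) = (\X\upharpoonright V)[A/x]$, which is another direct sum calculation.

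The main obstacle is the disjunction step, since one must lift a decomposition of the restricted team back to a decomposition of the original team while respecting both the scaling coefficient $k$ and the fiber-wise mass. The proportional reweighting above is the natural choice, but some care is needed to check that the pieces actually sum to $\X$ and restrict correctly; once this lemma is in place the quantifier cases follow by similar conditional-probability style computations.
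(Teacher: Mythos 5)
The paper does not prove this proposition itself; it imports it from \cite{HKMV18}, and your structural induction is the standard argument: preservation of marginal weights for the atoms, a fiber-lifting lemma for disjunction, and averaged/lifted choice functions for the quantifiers. That architecture is sound, and your atomic, conjunction, and quantifier cases check out (including the identities $\X[A/x]\upharpoonright(V\cup\{x\})=(\X\upharpoonright V)[A/x]$ and $(\X\upharpoonright V)[F'/x]=\X[F/x]\upharpoonright(V\cup\{x\})$ for your averaged $F'$).

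There is, however, a concrete error in the backward direction of your disjunction lemma: the scaling factors $k$ and $1-k$ do not belong in the denominators. Since $\Y'$ and $\Z'$ are distributions satisfying $k\Y'(s')+(1-k)\Z'(s')=(\X\upharpoonright V)(s')$, your $\Y$ gives $\sum_{s\in X}\Y(s)=\sum_{s'}\Y'(s')/k=1/k$, so $\Y$ is not a probabilistic team, and $(\Y\upharpoonright V)(s')=\Y'(s')/k\neq\Y'(s')$; moreover $k\Y(s)+(1-k)\Z(s)=\X(s)\cdot\bigl(\Y'(s')+\Z'(s')\bigr)/(\X\upharpoonright V)(s')$, which equals $\X(s)$ only in degenerate cases. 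So the ``direct computation'' you invoke in fact refutes the formula as written. The repair is to delete those factors: set $\Y(s):=\X(s)\cdot\Y'(s\upharpoonright V)/(\X\upharpoonright V)(s\upharpoonright V)$ and similarly for $\Z$; then $\sum_s\Y(s)=1$, $\Y\upharpoonright V=\Y'$, and $k\Y(s)+(1-k)\Z(s)=\X(s)$ as required. One further point about the ``obvious convention at degenerate points'': when $k\in\{0,1\}$, the team on the weight-zero side may put positive mass on some $s'$ with $(\X\upharpoonright V)(s')=0$, where proportional spreading is undefined; there you must instead choose an arbitrary preimage $s\in X$ of $s'$ and assign it the full weight $\Y'(s')$ (such a preimage exists because $X\upharpoonright V$ consists exactly of the restrictions of members of $X$). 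With these corrections the proof goes through.
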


Given two logics $\calL$ and $\calL'$ over probabilistic team semantics, we write $\calL\leq \calL'$ if for all open formulae $\phi(\tuple x)\in \calL$ there is a formula $\psi(\tuple x)\in \calL'$ such that $\A\models_\X \phi\Leftrightarrow \A\models_\X\psi$, for all structures $\A$ and probabilistic teams $\X$. The equality "$\equiv$" and strict inequality "$<$" relations between $\calL$ and $\calL'$ are defined from "$\leq$" in the standard way.

\paragraph{\textbf{Alternative Definition.}}
Probabilistic teams can also be defined as mappings
$\X:X \to \mathbb{R}_{\geq 0}$ that have no restriction for the total
sum of assignment weights, $\mathbb{R}_{\geq 0}$ being the set of all
non-negative reals. Probabilistic team semantics with respect to such
real weighted teams is then given exactly as in Definition
\ref{def:probsem}, except that we define disjunction without scaling:
\[\A\models_{\X} (\psi \lor \theta) \Leftrightarrow \A\models_\Y \psi
\text{ and } \A \models_\Z \theta \text{ for some $\Y,\Z$
  s.t. }\Y\sqcup \Z = \X,\]
where the union $\Y\sqcup \Z$ is defined such that
$(\Y\sqcup \Z)(s)= \Y(s) + \Z(s)$ for each $s$. Whether interpreting
probabilistic teams as probability distributions or just mappings from
assignments to non-negative reals does not make any difference in our
framework. Hence we write $\X: X \to [0,1]$ for a probabilistic
team that is a distribution such that $\sum_{s\in X} \X(s)=1$, and
$\X:X \to \mathbb{R}_{\geq 0}$ for a probabilistic team that is any
mapping from assignments to non-negative reals. A probabilistic team
of the former type is then a special case of that of the latter. We will use both notions and their associated semantics
interchangeably. If we need to distinguish between the two semantics,
we write $\models^{[0,1]}$ and $\models^{\geq 0}$ respectively for the
scaled (i.e., Definition \ref{def:probsem}) and non-scaled
variants. Given $\X:X \to \mathbb{R}_{\geq 0}$ and
$r\in \mathbb{R}_{\geq 0}$, we write $|\X|$ for the total weight
$\sum_{s\in X} \X(s)$ of $\X$, and $r\cdot \X$ for the probabilistic
team $\Y:X \to \mathbb{R}_{\geq 0}$ such that $\Y(s) = r\cdot \X(s)$
for all $s\in X$. The proposition below follows from a
straightforward induction (see Appendix \ref{a:prop1}).

\begin{repproposition}{prop}\label{lem:alt}
Let $\A$ be a structure, $\X:X\to \mathbb{R}_{\geq 0}$ a
  probabilistic team of $\A$,
  and
  $\phi \in \FO(\cpind,\approx,\approx^*,\dep(\cdot))$. Then
  \(
  \A\models^{\geq 0}_\X \phi \Leftrightarrow
  \A\models^{[0,1]}_{\frac{1}{|\X|} \cdot \X} \phi.
  \)
\end{repproposition}

\section{Expressiveness of $\FO(\pind)$}
Let $\X\colon X\to [0,1]$ be a probabilistic team
where $X$ is a finite set of assignements from a finite set
$D$ of variables.
%
A variable $x\in D$ is \emph{uniformly distributed} in $\X$ over a set of values $S$, if 
$$\X_{x=a}=\frac{1}{|S|}\text{ for all }a\in S\text{ and }X_{x=a}=0\text{ otherwise}.$$

The following lemma says essentially that if we can express constancy
and independence for a uniform distribution, then we can express~$\approx$.
Note that it may happen that we can express ``$\x$ uniformly distributed
and independent of~$\y$'' even when we cannot express ``$\x$ is independent
of~$\y$'' in general. For a proof of the lemma, see Appendix \ref{a:lemma}.


\begin{replemma}{lem}\label{lem:One}
Let $\A$ be structure with at least two elements and $\vec{z}$ an $n$-tuple of variables. Let
$\f(\z,d,c_1,c_2)$ be a formula
  such that for all probabilistic teams $\X$, whose variable domain includes  $\z,d,c_1,c_2$
  and for which  $\A\models_\X c_1\ne c_2$ and  $\A\models_\X \dep(c_1)\land \dep(c_2)$,
  it holds that
  \begin{align}\label{eq:0}
  \M\models_\X\f \quad\Leftrightarrow\quad &\text{$d$ is uniformly distributed over the two values of $c_1,c_2$}\\
  &\text{and $d$ is independent of~$\z$.} \notag
  \end{align}
  Then $\x\approx\y$ can be expressed for $n$-tuples $\x$ and $\y$
  using $\f$ and the constancy atom.
\end{replemma}

\begin{theorem}\label{thm:ind}
  $\FO(\approx)\leq \FO(\pind)$.
\end{theorem}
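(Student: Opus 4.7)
The strategy is to apply Lemma~\ref{lem:One}: I must exhibit a formula $\phi(\z, d, c_1, c_2) \in \FO(\pind)$ such that, whenever $c_1 \neq c_2$ and both are constant in a probabilistic team $\X$, it holds that $\A \models_\X \phi$ iff $d$ is uniformly distributed over the two values of $c_1, c_2$ and $\pmi{d}{\z}$. A preliminary remark: the constancy atoms $\dep(c_i)$ appearing in the output formula produced by Lemma~\ref{lem:One} are themselves already in $\FO(\pind)$, since by Proposition~\ref{prop:dep} we have $\dep(c) \equiv \pmi{c}{c}$, and a direct computation shows that $\pmi{c}{c}$ asserts $|\X_{c=a}|^2 = |\X_{c=a}|$ for all $a$ and hence forces $|\X_{c=a}| \in \{0,1\}$, i.e., constancy of $c$ on the positive support.

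For the construction of $\phi$, two of the three conditions are immediate: the independence condition is the atom $\pmi{d}{\z}$, and the support condition $d \in \{c_1, c_2\}$ is captured by the team-semantic disjunction $d = c_1 \lor d = c_2$, whose semantic split $\X = \Y \sqcup_k \Z$ forces $\Y$ and $\Z$ to be supported on $\{s(d) = c_1\}$ and $\{s(d) = c_2\}$ respectively, with scaling factor $k = |\X_{d=c_1}|$. The remaining, and delicate, condition is to force $k = 1/2$, i.e., uniformity $|\X_{d=c_1}| = |\X_{d=c_2}| = 1/2$. The plan is to introduce one or more fresh variables via existential quantification and to add a small collection of marginal independence atoms between these witnesses and $d$, designed so that the scaling $k$ of the disjunction-induced split is algebraically pinned down to $1/2$ as the unique solution of a product-of-marginals system of equations.

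The main obstacle is exactly this uniformity-enforcing subformula. A single atom $\pmi{\x}{\y}$ is insensitive to the precise value of any individual marginal (it only equates joint with product of marginals), while team-semantic disjunction alone merely witnesses the existence of some split with some scaling $k$. Neither tool in isolation constrains $k$, so the construction must exploit their interaction: the auxiliary existential variables must be coupled to $d$ by marginal independence atoms in such a way that the split $\X = \Y \sqcup_k \Z$ inherited from $d = c_1 \lor d = c_2$ is forced to be balanced. Once such a $\phi$ is obtained, Lemma~\ref{lem:One} immediately yields, for arbitrary $n$-tuples $\x$ and $\y$, a formula of $\FO(\pind)$ equivalent to $\x \approx \y$, establishing $\FO(\approx) \leq \FO(\pind)$.
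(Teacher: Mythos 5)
Your reduction to Lemma~\ref{lem:One} is the right frame, and your preliminary observations match the paper: constancy is handled by $\dep(c)\equiv\pmi{c}{c}$ (Proposition~\ref{prop:dep}), the independence requirement of Lemma~\ref{lem:One} is the atom $\pmi{\z}{d}$, and the support condition is the team disjunction $d=c_1\lor d=c_2$. But the proof stops exactly where the real work begins. You state that the scaling factor $k$ of the disjunctive split must be ``algebraically pinned down to $1/2$'' by coupling fresh existentially quantified witnesses to $d$ through marginal independence atoms, yet you never exhibit such a coupling, and you yourself flag it as ``the main obstacle.'' Since the content of the theorem is precisely that such a formula exists in $\FO(\pind)$, what you have is a plan, not a proof.

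The device you are missing works as follows. One first quantifies an auxiliary variable $a$ \emph{universally} and relativizes it to $\{c_1,c_2\}$ via the disjunctive trick $\forall a\,\big((a\neq c_1\land a\neq c_2)\lor((a=c_1\lor a=c_2)\land\dots)\big)$; because universal quantification always distributes values uniformly, this manufactures a variable that is guaranteed to be uniform over the two constants and independent of everything else --- a fixed yardstick. One then existentially quantifies $b\in\{c_1,c_2\}$ subject to $(a=b\land d=c_1)\lor(a\neq b\land d=c_2)$, so that $b$ is the ``exclusive or'' of $a$ and $d$, and asserts $\pmi{a}{b}$. Since $a$ is uniform and independent of $d$, one computes $|\X_{a=v,\,b=w}|=\tfrac12|\X_{d=u}|$ for the appropriate $u$, while $|\X_{b=w}|=\tfrac12$; hence $\pmi{a}{b}$ holds if and only if $|\X_{d=c_1}|=|\X_{d=c_2}|=\tfrac12$. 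Note that the universal quantifier is essential and is absent from your sketch, which speaks only of existential witnesses: an existentially quantified helper has its distribution chosen freely to make the formula true, so it cannot by itself serve as a reference distribution against which the marginal of $d$ is measured; only $\forall$ yields a canonical uniform variable in this logic. Without this (or an equivalent) construction, the formula $\f$ required by Lemma~\ref{lem:One} is not produced and the theorem is not established.
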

\begin{proof}

Proposition \ref{prop:dep} established that the constancy atom $\dep(x)$ can be equivalently expressed by the
independence atom $x \!\pind\! x$. Hence it is
  enough to show that we can define the formula $\f$ of Lemma \ref{lem:One} by using $\pind$.
  
  Let $\A$ and $\X$ be as assumed in Lemma \ref{lem:One}. We use below $\exists b\in \{c_1,c_2\}\, \theta$ as an abbreviation for $\exists b  (b=c_1 \lor b= c_2) \land \theta$, and $\forall  b\in \{c_1,c_2\}\, \theta$ for $\forall b   (b \neq c_1 \land b \neq c_2) \lor \big(  (b = c_1 \lor b = c_2) \land \theta \big)$.   Define $\f(\z,d,c_1,c_2)$ as
\[
  (\z\pind d)\land \forall a \in \{c_1, c_2\}
  \exists b \in \{c_1,c_2\}\big[(a\pind b)\land \big((a=b\land d=c_1)\lor (a\ne b\land
  d=c_2)\big)\big].
\]
  It suffices to prove (5). 
  The formula $\f$ clearly states that $\z$ and $d$ are independent. The formula also states that the values of $d$ range over the values of $c_1$ and $c_2$. It remains to be
  shown, conditioned on that $\z$ and $d$ are independent, that
  \[
  \text{$\A \models_\X \phi$ if and only if $d$ is uniformly distributed over $c_1$ and $c_2$.}
  \]
  Note that, by assumption of Lemma~\ref{lem:One},
  $c_1$ and $c_2$ are distinct constants.
  Let $\X_1$ be a team obtained from $\X$ by the quantification of $a$ and $b$.
  By the definition of universal quantification, in $\X_1$ $a$ is uniformly distributed and independent of everything else except maybe $b$. Note that 
  $d$  is uniformly distributed over the values of $c_1$ and $c_2$ in $X$ if and only if it is in $X_1$.

  If $d$
  is uniformly distributed over the values of $c_1$ and $c_2$, then picking values of $b$ with a uniform probability such that the right conjunct in
  \begin{equation}\label{eq:2}
  \big[(a\pind b)\land \big((a=b\land d=c_1)\lor (a\ne b\land
  d=c_2)\big)\big]
  \end{equation}
  holds clearly yields a team in which the left conjunct also holds.
  However, if $d$ is not uniformly distributed over  $c_1$ and $c_2$, then picking values for $b$ such that the right conjunct of \eqref{eq:2} holds 
  will yield $b$ that is not independent on $a$. \qed
\end{proof}

We also note that conditional independence is definable using marginal
independence. The proof applies ideas from \cite{HKMV18} and can be
found in Appendix \ref{a:thm}.
\begin{reptheorem}{thm}\label{thm:marg}
$\FO(\pind) \equiv \FO(\cpind) $.
\end{reptheorem}

%

\section{Expressiveness of $\FO(\approxt)$ and $\FO(\approx)$}
Initially it may seem that first-order logic with marginal distribution equivalence atoms is
less expressive than that with marginal identity atoms, as the former
atoms are given a strictly weaker truth condition. Contrary to this
intuition, however, we will in this section show that $\FO(\approxt)$
is actually strictly more expressive than $\FO(\approx)$. The result
is proven in two phases. First, in Sect. \ref{sect:approxt} we show
that dependence and marginal identity can be defined in
$\FO(\approxt)$, the former by a single marginal distribution equivalence atom and the latter by
a more complex formula.
Second, in Sect. \ref{sect:approx} we show
that the expressiveness of $\FO(\approx)$ is restricted by a union
closure property which is similar to that of inclusion logic in team
semantics. Since dependence atoms lack this property, the strict
inequality between $\FO(\approx)$ and $\FO(\approxt)$
follows. 

\subsection{Translations of Dependence and Marginal Identity to
  $\FO(\approxt)$}\label{sect:approxt}
We observe first that dependence atoms can be expressed in terms of
marginal distribution equivalence atoms, which in turn are definable using marginal identity and
dependence atoms. 
\begin{proposition}\label{prop:easytranslations}
  The following equivalences hold:
  \begin{enumerate}
  \item $\dep(\tuple x,y)\equiv \tuple xy\approxt \tuple x$,
  \item
    $\tuple x \approxt \tuple y \equiv \exists \tuple z (\dep(\tuple
    y,\tuple z) \wedge \dep(\tuple z,\tuple y)\wedge \tuple x \approx
    \tuple z)$.
  \end{enumerate}
\end{proposition}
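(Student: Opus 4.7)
The plan is to prove each equivalence by a multiset-refinement / bijection analysis. For part (1), observe that for each $\tuple a$ the identity $|\X_{\tuple x = \tuple a}| = \sum_{b \in A} |\X_{\tuple xy = \tuple ab}|$ exhibits the multiset of positive weights on the $\tuple xy$ side as a refinement of the multiset on the $\tuple x$ side: each atomic weight $|\X_{\tuple x = \tuple a}|$ gets replaced by the nonempty multiset of its non-zero summands $|\X_{\tuple xy = \tuple ab}|$. Since the sums agree, the two multisets coincide precisely when, for every $\tuple a$ with $|\X_{\tuple x = \tuple a}| > 0$, exactly one $b$ contributes a positive weight; in other words, when $\dep(\tuple x, y)$ holds on the support $X^+$.

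For part (2), the backward direction is immediate: on the witnessing team the atoms $\dep(\tuple y, \tuple z)$ and $\dep(\tuple z, \tuple y)$ together give a weight-preserving bijection between the positive-weight $\tuple y$- and $\tuple z$-values, yielding $\tuple y \approxt \tuple z$; combined with $\tuple x \approx \tuple z$ (which a fortiori implies $\tuple x \approxt \tuple z$), transitivity of $\approxt$ delivers $\tuple x \approxt \tuple y$ on the witness and, since the atom is local in $\tuple x, \tuple y$ (Proposition \ref{prop:locality}), also on $\X$. For the forward direction, use $\tuple x \approxt \tuple y$ to fix a bijection $\sigma$ between the positive-weight $\tuple y$-values and the positive-weight $\tuple x$-values with $|\X_{\tuple y = \tuple b}| = |\X_{\tuple x = \sigma(\tuple b)}|$; then witness $\exists \tuple z$ by giving each assignment $s$ the deterministic value $s(\tuple z) = \sigma(s(\tuple y)) \in A^{|\tuple x|}$. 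Concretely, unfold $\exists \tuple z$ as $\exists z_1 \cdots \exists z_{|\tuple x|}$ and quantify each $z_i$ via an $F \colon X \to p_A$ placing unit mass on the $i$-th coordinate of $\sigma(s(\tuple y))$. In the resulting team $\X'$, the correspondence between $\tuple y$- and $\tuple z$-values on $(X')^+$ is bijective, validating both dependence atoms, and for every $\tuple a$ one has $|\X'_{\tuple z = \tuple a}| = |\X_{\tuple y = \sigma^{-1}(\tuple a)}| = |\X_{\tuple x = \tuple a}|$, which is $\tuple x \approx \tuple z$.

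The only delicate point is the bookkeeping in the forward direction of (2): because probabilistic $\exists$ quantifies distributions $F \colon X \to p_A$ rather than single values, one must verify that Dirac-mass choices of $F$ indeed realise the intended deterministic assignment $s \mapsto \sigma(s(\tuple y))$ and produce the claimed marginals of $\X'$. Once this is checked, the rest is an elementary translation between the defining equations of $\dep$, $\approx$, and $\approxt$.
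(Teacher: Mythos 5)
Your proof is correct, and it supplies exactly the details the paper omits: Proposition \ref{prop:easytranslations} is stated there as an observation with no proof given. Both halves are sound --- in (1) the decisive point is your refinement/cardinality observation that the $\tuple xy$-multiset can only match the $\tuple x$-multiset if every positive-weight fibre over $\tuple a$ is a singleton, and in (2) the weight-preserving bijection extracted from the two dependence atoms (resp.\ built from the multiset equality and realised by Dirac-mass choices of $F$), together with the appeal to Proposition \ref{prop:locality}, is the natural argument one would expect here.
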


Defining marginal identity atoms in $\FO(\approxt)$ is more
cumbersome.
Let $\X:X\to \mathbb{R}_{\geq 0}$ be a probabilistic team, and $\phi$
a quantifier-free first-order formula over the empty vocabulary (i.e.,
such that its satisfaction depends only on the variable
assignment). We define $\X_{\phi}:X\to \mathbb{R}_{\geq 0}$ as
the probabilistic team such that $\X_{\phi}(s) = \X(s)$ if $s$
satisfies $\phi$, and $\X_\phi(s)=0$ otherwise. Given two sequences of variables $\tuple x=(x_1,\ldots ,x_n)$ and $\tuple y= (y_1, \ldots ,y_n)$, we write $\tuple x \neq \tuple y$ as a shorthand for $\bigvee_{i=1}^n \neg x_i = y_i$.

\begin{theorem}\label{thm:incbyincs}
  $\tuple x\approx \tuple y$ is equivalent to $\phi\in \FO(\approxt)$
  where
  \[ \phi:=\forall \tuple z \big( (\tuple z\neq \tuple x \wedge \tuple
  z \neq \tuple y) \vee ((\tuple z= \tuple x \vee\tuple z=\tuple y)
  \wedge \tuple z \approxt\tuple x \wedge \tuple z \approxt \tuple
  y)\big).\]
\end{theorem}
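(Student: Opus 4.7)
The plan is to pass to the non-scaled semantics via Proposition \ref{lem:alt} and then analyze the team $\X' := \X[A^n/\tuple z]$, with $n := |\tuple x| = |\tuple y|$, obtained after the universal quantification of $\tuple z$. Since the two disjuncts of $\phi$ impose mutually exclusive conditions on the positive-weight assignments of $\X'$, the split $\X' = \Y \sqcup \Z$ witnessing the disjunction is forced: $\Z$ must coincide with the restriction of $\X'$ to those $s$ satisfying $s(\tuple z) \in \{s(\tuple x), s(\tuple y)\}$, and $\Y$ with its complement, on which the first disjunct holds trivially. So $\phi$ reduces to $\tuple z \approxt \tuple x \wedge \tuple z \approxt \tuple y$ evaluated on this explicit $\Z$.

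A direct count---each $s \in X$ with $s(\tuple x) = \tuple a$ contributes the single copy $s(\tuple a/\tuple z)$ to $\Z$ when $s(\tuple y) = \tuple a$, and the two copies $s(\tuple a/\tuple z)$ and $s(s(\tuple y)/\tuple z)$ otherwise---yields, writing $c := 1/|A|^n$, for every $\tuple a \in A^n$:
\begin{align*}
|\Z_{\tuple z = \tuple a}| &= c\bigl(|\X_{\tuple x = \tuple a}| + |\X_{\tuple y = \tuple a}| - |\X_{\tuple x\tuple y = \tuple a\tuple a}|\bigr),\\
|\Z_{\tuple x = \tuple a}| &= c\bigl(2|\X_{\tuple x = \tuple a}| - |\X_{\tuple x\tuple y = \tuple a\tuple a}|\bigr),\\
|\Z_{\tuple y = \tuple a}| &= c\bigl(2|\X_{\tuple y = \tuple a}| - |\X_{\tuple x\tuple y = \tuple a\tuple a}|\bigr).
\end{align*}
If $\tuple x \approx \tuple y$ holds in $\X$, the three right-hand sides agree pointwise for every $\tuple a$, so $\tuple z \approx \tuple x$ and $\tuple z \approx \tuple y$ hold in $\Z$, which already implies the $\approxt$-atoms needed for $\phi$.

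For the converse I use the elementary fact that equal multisets share all power sums. The atoms $\tuple z \approxt \tuple x$ and $\tuple z \approxt \tuple y$ thus yield $\sum_{\tuple a} |\Z_{\tuple x=\tuple a}|^2 = \sum_{\tuple a} |\Z_{\tuple z=\tuple a}|^2 = \sum_{\tuple a} |\Z_{\tuple y=\tuple a}|^2$. Substituting the three identities above and adding the two resulting differences, the cross-terms in $|\X_{\tuple x\tuple y = \tuple a\tuple a}|$ cancel and what remains simplifies to $2c^2 \sum_{\tuple a}\bigl(|\X_{\tuple x=\tuple a}| - |\X_{\tuple y=\tuple a}|\bigr)^2 = 0$. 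Non-negativity of the summands then forces $|\X_{\tuple x=\tuple a}| = |\X_{\tuple y=\tuple a}|$ for every $\tuple a$, which is exactly $\tuple x \approx \tuple y$.

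This converse is the main obstacle, precisely because $\approxt$ is a priori strictly weaker than $\approx$: multiset equality alone does not fix the mass of any single marginal cell. The key trick is that using the two atoms in tandem, through the second-moment identity, causes the joint-weight cross-terms $|\X_{\tuple x\tuple y = \tuple a\tuple a}|$ to cancel, leaving a genuine sum of squares in the quantity we wish to vanish.
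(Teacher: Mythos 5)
Your proof is correct, and while the setup coincides with the paper's (both pass to the team obtained by universally quantifying $\tuple z$, observe that the disjunctive split is forced onto $\Z=\X'_{\theta}$ with $\theta:=(\tuple z=\tuple x\vee\tuple z=\tuple y)$, and compute the same three marginals --- your $2u_{\tuple a}-w_{\tuple a}$, $2v_{\tuple a}-w_{\tuple a}$, $u_{\tuple a}+v_{\tuple a}-w_{\tuple a}$ are exactly the paper's $2l_{\tuple i}+c_{\tuple i}$, $2r_{\tuple i}+c_{\tuple i}$, $l_{\tuple i}+r_{\tuple i}+c_{\tuple i}$ under $u=l+c$, $v=r+c$, $w=c$), the converse direction is handled by a genuinely different argument. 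The paper reduces to the indices where $l_{\tuple i}\neq r_{\tuple i}$ and runs an extremal argument: the smallest element $2l_{\tuple j}+c_{\tuple j}$ of $W_{\tuple x}$ must occur in $W_{\tuple z}$ as some $l_{\tuple k}+r_{\tuple k}+c_{\tuple k}$, and either sign of $l_{\tuple k}-r_{\tuple k}$ produces a strictly smaller element of $W_{\tuple x}=W_{\tuple y}$, a contradiction. You instead use that equal multisets have equal second power sums (zero entries, which the $\approxt$ multisets omit, contribute nothing, so this is harmless), and verify that $\sum_{\tuple a}|\Z_{\tuple x=\tuple a}|^2+\sum_{\tuple a}|\Z_{\tuple y=\tuple a}|^2-2\sum_{\tuple a}|\Z_{\tuple z=\tuple a}|^2=2c^2\sum_{\tuple a}(u_{\tuple a}-v_{\tuple a})^2$, with the $w_{\tuple a}$ cross-terms cancelling; I checked the algebra and it is right. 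Your route is purely algebraic --- no case analysis, no without-loss-of-generality pruning of indices --- and the sum-of-squares identity makes it transparent exactly why the two $\approxt$ atoms in tandem recover pointwise equality from mere multiset equality; the paper's argument is more elementary in that it uses nothing beyond comparing minima. Both are complete proofs.
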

\begin{proof}   Assume that
  $\tuple x,\tuple y, \tuple z$ are all $m$-ary. Let $\A$ be a
  structure with domain $A=\{1, \ldots ,n\}$, and let
  $\X: X \to \mathbb{R}_{\geq 0}$ a probabilistic team. Assume first
  that $\A \models_{\X} \tuple x\approx\tuple y$, that is, for all
  $\tuple i\in A^m$, the weights $|\X_{\tuple x=\tuple i}|$ and
  $|\X_{\tuple y=\tuple i}|$ coincide. It suffices to show that
  $\A \models_{\Y} \tuple z \approxt \tuple x \wedge\tuple z \approxt
  \tuple y$
  for $\Y:= \X'_{\theta}$ where $\theta$ is
  $ \tuple z=\tuple x\vee \tuple z=\tuple y$ and
  $\X'=\X[A^m/\tuple z]$ is the probabilistic team obtained from $\X$
  by distributing $A^m$ to $\tuple z$ uniformly.  For each
  $\tuple i\in A^m$ we consider three weight measures, obtained by
  dividing assignments associated with $\tuple i$ into three parts,
  $l_{\tuple i}:=|\X_{\tuple x=\tuple i\wedge \tuple x\neq \tuple
    y}|$,
  $r_{ \tuple i} :=|\X_{\tuple y= \tuple i\wedge \tuple x\neq \tuple
    y}|$,
  and
  $c_{\tuple i}:=|\X_{\tuple x=\tuple i \wedge \tuple y=\tuple i}|$.
  Then
  \[ |\Y_{\tuple x=\tuple i} |= |\X'_{\theta\wedge \tuple x= \tuple i
  }|= |\X'_{ \theta\wedge \tuple x=\tuple i\wedge \tuple x\neq \tuple
    y}| + |\X'_{\theta\wedge \tuple x= \tuple i\wedge \tuple y=\tuple
    i }|= \frac{2l_{\tuple i}+c_{\tuple i}}{n^m}.\]
  Observe that for
  $\X'_{ \theta\wedge \tuple x= \tuple i\wedge \tuple x\neq \tuple y}$
  we first partition each assignment in
  $\X_{\tuple x=\tuple i\wedge \tuple x\neq\tuple y}$ uniformly to
  $n^m$ parts in terms of the value of $\tuple z$ and then keep only
  those parts where $\theta$ holds.
  Since $\tuple x$ and $\tuple y$ disagree for every assignment in
  $\X'_{\tuple x=\tuple i\wedge \tuple x\neq \tuple y}$, the total
  weight of
  $\X'_{ \theta\wedge \tuple x=\tuple i\wedge \tuple x\neq \tuple y}$
  is obtained by multiplying $l_{\tuple i}$ with $\frac{2}{n^m}$. For
  $\X'_{ \theta\wedge \tuple x=\tuple i\wedge \tuple y=\tuple i}$ we
  have identical $\tuple x$ and $\tuple y$, and hence its weight is
  obtained by multiplying $c_{\tuple i}$ with $\frac{1}{n^m}$. By
  analogous reasoning we obtain that
  \[ |\Y_{\tuple y=\tuple i}|= \frac{2r_{\tuple i}+c_{\tuple i}}{n^m}
  \text{ and }|\Y_{\tuple z=\tuple i}|= \frac{r_{\tuple i}+l_{\tuple
      i}+c_{\tuple i}}{n^m}.\]
  Since our assumption implies $l_{\tuple i}=r_{\tuple i}$ for all
  $\tuple i$, the claim now follows from the observation that
  $\{\{|\Y_{\tuple u=\tuple i}|\mid \tuple i\in A^m\}\}$ are identical
  multisets for $\tuple u\in\{\tuple x,\tuple y,\tuple z\}$.
  
  Vice versa, assuming $\A \models_{\X} \phi$ we show
  $\A \models_{\X} \tuple x\approx \tuple y$. Let the weights
  $l_{\tuple i},r_{\tuple i},c_{\tuple i}$ and the probabilistic team
  $\Y$ be as above. By assumption we have
  $\A\models_\Y \tuple z \approxt \tuple x \wedge \tuple z \approxt
  \tuple y$, and
  thus 
  the following multisets are identical: 
  \begin{align*}
    &W_{\tuple x}:= \{\{2l_{\tuple 1}+c_{\tuple 1}, \ldots ,2l_{\tuple n}+c_{\tuple n}\}\},\\ 
    &W_{ \tuple y}:= \{\{2r_{\tuple 1}+c_{\tuple 1}, \ldots ,2r_{\tuple n}+c_{\tuple n}\}\}, \\
    &W_{\tuple z}:= \{\{l_{\tuple 1}+r_{\tuple 1} +c_{\tuple 1}, \ldots ,l_{\tuple n}+r_{\tuple n}+c_{\tuple n}\}\},
  \end{align*}
  where
    $\vec{1}=(1,\dots,1)$ and $\vec{n}=(n,\dots,n)$. 
  Assume to the contrary that
  $\A \not\models_{\X} \tuple x\approx \tuple y$, that is,
  $l_{\tuple i}\neq r_{\tuple i}$ for some $\tuple i$. Observe that
  whenever $l_{ \tuple j}=r_{ \tuple j}$ agree, then $\tuple j$
  contributes the same weight to all $W_{\tuple x}$, $W_{\tuple y}$,
  and $W_{\tuple z}$. Therefore, we may assume without loss of
  generality that $l_{\tuple i}\neq r_{\tuple i}$ for all $\tuple i$.
  Assume that $2l_{\tuple j}+c_{\tuple j}$ is the smallest element
  from $W_{\tuple x}$.  Since $W_{\tuple x}= W_{\tuple z}$, it follows
  that
  $2l_{\tuple j}+c_{\tuple j}= l_{\tuple k}+r_{\tuple k} +c_{\tuple
    k}$
  for some $\tuple k$. If $l_{\tuple k} < r_{\tuple k}$, then
  $2l_{\tuple k}+c_{\tuple k}< l_{\tuple k}+r_{\tuple k} +c_{\tuple
    k}$
  which contradicts the assumption that $2l_{\tuple j}+c_{\tuple j}$
  is smallest. Since $W_{\tuple x} = W_{\tuple y}$, similar
  contradiction follows from $r_{\tuple k}< l_{\tuple k}$, too. Hence,
  $\A \models_{\X} \tuple x\approx \tuple y$ which concludes the
  proof. \qed
\end{proof}

The following theorem now combines the results of this
section. Note that the translations to both directions are of linear size. 

\begin{theorem}\label{theorem:tahti}
  $\FO(\approxt)\equiv \FO(\approx,\dep(\cdot))$.
\end{theorem}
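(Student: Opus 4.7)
The plan is to prove both inclusions by straightforward structural induction on formulas, with the main work already done by the two previous results.

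For the inclusion $\FO(\approx,\dep(\cdot))\leq \FO(\approxt)$, I would define a translation $\phi\mapsto \phi^*$ from $\FO(\approx,\dep(\cdot))$ into $\FO(\approxt)$ by recursion on $\phi$. First-order atoms and their negations, as well as Boolean connectives and quantifiers, are translated homomorphically. For a dependence atom $\dep(\tuple x, y)$ (with the constancy atom $\dep(y)$ obtained as the special case $\tuple x=\emptyset$) I would replace it by $\tuple x y \approxt \tuple x$, which is justified by Proposition \ref{prop:easytranslations}(1). For a marginal identity atom $\tuple x\approx \tuple y$ I would replace it by the $\FO(\approxt)$-formula $\forall \tuple z\big((\tuple z\neq \tuple x \wedge \tuple z\neq \tuple y) \vee ((\tuple z=\tuple x\vee \tuple z=\tuple y)\wedge \tuple z\approxt \tuple x \wedge \tuple z\approxt \tuple y)\big)$ provided by Theorem \ref{thm:incbyincs}, picking $\tuple z$ fresh with respect to the enclosing formula.

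For the converse $\FO(\approxt)\leq \FO(\approx,\dep(\cdot))$, I would likewise define a translation by recursion, where now the only non-trivial step is the atomic case $\tuple x\approxt \tuple y$, which is replaced by $\exists \tuple z\,(\dep(\tuple y,\tuple z)\wedge \dep(\tuple z,\tuple y)\wedge \tuple x\approx\tuple z)$ according to Proposition \ref{prop:easytranslations}(2), again with $\tuple z$ fresh.

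The correctness of both translations reduces, via induction, to the fact that the replacement equivalences hold in \emph{every} probabilistic team over \emph{every} structure, which is exactly what Proposition \ref{prop:easytranslations} and Theorem \ref{thm:incbyincs} assert. The induction steps for Boolean connectives and quantifiers are routine because probabilistic team semantics is compositional in subformulas. The only mild subtlety is variable capture: the translation of $\tuple x\approx\tuple y$ introduces a fresh universally quantified tuple $\tuple z$, and the translation of $\tuple x\approxt\tuple y$ introduces a fresh existentially quantified $\tuple z$, so in each case I would rename $\tuple z$ to avoid clashes with variables occurring in the ambient formula; locality (Proposition \ref{prop:locality}) guarantees that this renaming has no semantic effect. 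I do not foresee a genuine obstacle here: once the two atomic equivalences are in hand, the theorem is essentially a bookkeeping exercise.
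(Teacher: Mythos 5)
Your proposal is correct and matches the paper's intent exactly: the paper offers no separate proof of Theorem~\ref{theorem:tahti}, stating only that it ``combines the results of this section,'' i.e.\ the atom-by-atom replacements from Proposition~\ref{prop:easytranslations} and Theorem~\ref{thm:incbyincs} lifted compositionally, which is precisely your argument. The freshness/renaming point you raise is the right (and only) bookkeeping concern, and it is handled as you describe.
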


\subsection{Scaled Union Closure of $\FO(\approx)$}\label{sect:approx}
Inclusion logic is known to be union closed over teams. This means
that for all structures $\A$, teams $X$, and inclusion logic formulae
$\phi$: if $\A\models_X \phi$ and $\A \models_Y \phi$, then
$\A \models_{X\cup Y} \phi$.  The following proposition, proven in Appendix \ref{a:prop2}, demonstrates that $\FO(\approx)$ is endowed with an analogous closure
property, namely, that all formulae of $\FO(\approx)$ are closed under
all $k$-scaled unions of probabilistic teams.
\begin{repproposition}{prop:ucl}\label{prop:uc}
  Let $\A$ be a model, $\phi\in \FO(\approx)$ a formula, and
  $\X:X\to [0,1]$ and $\Y:X\to [0,1]$ two probabilistic teams. Then
  for all $k\in [0,1]$:
  \[\text{if }\A\models_\X \phi\text{ and }\A\models_\Y \phi\text{,
    then } \A\models_{\X \kcup \Y} \phi.\]
\end{repproposition}
\vspace{-.7cm}

As a corollary we observe that $\FO(\approx)$ is strictly weaker than
$\FO(\approx^*)$. Recall from Proposition \ref{prop:easytranslations}
that the constancy atom $\dep(x)$ is definable in
$\FO(\approx^*)$. However, constancy is clearly not preserved under
$k$-scaled unions, therefore falling outside the scope of
$\FO(\approx)$. Furthremore, by Theorem
\ref{thm:incbyincs} $\FO(\approx^*)$ is at least as expressive as
$\FO(\approx)$.

\begin{corollary}
  $\FO(\approx)< \FO(\approx^*)$.
\end{corollary}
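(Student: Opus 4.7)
The plan is to split the strict inequality $\FO(\approx) < \FO(\approx^*)$ into (i) the non-strict inclusion $\FO(\approx) \leq \FO(\approxt)$ and (ii) a separating witness showing that some formula of $\FO(\approxt)$ has no equivalent in $\FO(\approx)$.

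For (i), I would argue compositionally. Every connective, quantifier, and first-order atom of $\FO(\approx)$ occurs unchanged in $\FO(\approxt)$, and each marginal identity atom $\tuple x \approx \tuple y$ can be replaced by its equivalent $\FO(\approxt)$-formula supplied by Theorem \ref{thm:incbyincs}. Performing this substitution uniformly throughout a formula $\phi \in \FO(\approx)$ yields an $\FO(\approxt)$-formula equivalent to $\phi$ under probabilistic team semantics.

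For (ii), I would single out the constancy atom $\dep(x)$ as a property definable in $\FO(\approxt)$ but not in $\FO(\approx)$. Definability in $\FO(\approxt)$ follows from Proposition \ref{prop:easytranslations}(1) applied with empty first argument, giving $\dep(x) \equiv x \approxt \emptyset$. To show that $\dep(x)$ is not expressible in $\FO(\approx)$, I would invoke the scaled union closure established in Proposition \ref{prop:uc}. Take a structure $\A$ with two distinct elements $a, b$ and the assignments $s_1, s_2$ on $\{x\}$ sending $x$ to $a$ and $b$, respectively. Define probabilistic teams $\X_1, \X_2 \colon \{s_1,s_2\} \to [0,1]$ by $\X_1(s_1) = 1$, $\X_1(s_2) = 0$ and $\X_2(s_1) = 0$, $\X_2(s_2) = 1$. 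Both satisfy $\dep(x)$, since the unique assignment of positive weight in each assigns $x$ a single value. However, the $1/2$-scaled union $\X_1 \qcup{1/2} \X_2$ places weight $1/2$ on each of $s_1, s_2$, so $x$ attains two distinct values with positive weight and $\dep(x)$ fails. By Proposition \ref{prop:uc}, if $\dep(x)$ were equivalent to some $\psi \in \FO(\approx)$, then $\A \models_{\X_1 \qcup{1/2} \X_2} \psi$ would have to hold, contradicting the behaviour of $\dep(x)$.

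The heavy lifting for both directions has already been done by Theorem \ref{thm:incbyincs} and Proposition \ref{prop:uc}, so the remaining work is essentially bookkeeping. The only subtle point is verifying that the witness teams $\X_1$ and $\X_2$ share a common assignment domain, as required by the union closure proposition, which is immediate from the explicit construction (both teams are built on the same two-element assignment set $\{s_1, s_2\}$, permitting weight zero).
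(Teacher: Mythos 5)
Your proposal is correct and follows essentially the same route as the paper: the inclusion $\FO(\approx)\leq\FO(\approxt)$ via Theorem \ref{thm:incbyincs}, and the separation via the constancy atom, which is definable in $\FO(\approxt)$ by Proposition \ref{prop:easytranslations} but violates the scaled union closure of Proposition \ref{prop:uc}. The explicit witness teams you construct are a valid (and welcome) elaboration of the paper's remark that ``constancy is clearly not preserved under $k$-scaled unions.''
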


\section{Binary Probabilistic Teams}

In this section we restrict attention to binary probabilistic teams
and propositional logic extended with quantifiers (see \cite{HLKV16} for related work). We define the
syntax of \emph{quantified propositional logic} $\QPL$ by the
following grammar
\begin{equation}\label{prop:synt}
  \phi::= p\mid \neg p \mid \phi\vee \phi \mid \phi \wedge \phi  \mid \exists p \phi \mid \forall p \phi,
\end{equation}
where $p$ is a proposition variable. The probabilistic team semantics
of $\QPL$ is defined analogously to that of first-order formulae. We
say that a probabilistic team $\X:X\to [0,1]$ is \emph{binary} if $X$
assigns variables into $\{0,1\}$. For a $\QPL$ formula $\phi$ and a
binary probabilistic team $\X:X \to [0,1]$, we write $\X\models \phi$
iff $\A \models_\X \phi^*$, where $\phi^*$ is the first-order formula
obtained from $\phi$ by substituting $P(p)$ for $p$ and $\neg P(p)$ for $\neg p$,
and letting $\A := (\{0,1\},P^\A:=\{1\})$. Furthermore, we denote classical
negation by "$\sim$". That is, we write $\X \models \negg \phi$ if
$\X \not\models \phi$. We let $\QPL(\sim)$ denote the logic obtained
by the grammar \eqref{prop:synt} extended with $\negg \phi$, and
denote by $\QPL(\sim,C)$ the extension of $\QPL(\sim)$ by any collection of
dependencies $C$. 

We observe that $\QPL(\sim,\cpind,\approx)$ can be interpreted as
statements of real arithmetic. As truth in real arithmetic is
decidable, this gives us some fairly conservative upper bounds with
respect to the complexity of satisfiability and validity of
$\QPL(\sim,\cpind,\approx)$. We say that
$\phi\in \QPL(\sim,\cpind,\approx)$ is \emph{satisfiable} if $\phi$ is
satisfied by some non-empty binary probabilistic team.\footnote{Empty team satisfies every formula without $\sim$; with $\sim$ it is a non-interesting special case \cite{HannulaKVV18}.} Also, $\phi$
is \emph{valid} is $\phi$ is satisfied by all binary probabilistic
teams. Note that the \emph{free variables} of a $\QPL(\sim,C)$ formula are defined analogously to the first-order case.

\begin{theorem}\label{thm:arit}
  For each 
  $\phi \in \QPL(\sim,\cpind)$ ($\phi \in \QPL(\sim,\approx)$, resp.)
  there exists a first-order sentence $\psi$ over vocabulary
  $\{+,\times,\leq,0,1\}$ ($\{+,\leq,0\}$, resp.) such that $\phi$
  is satisfiable iff $(\mathbb{R},+,\times,\leq ,0,1)\models \psi$
  ($(\mathbb{R},+,\leq,0)\models \psi$, resp.).
\end{theorem}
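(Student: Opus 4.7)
The plan is to translate $\phi$ compositionally into a first-order sentence over the reals. Fix the propositional variables $p_1,\ldots,p_n$ of $\phi$ and enumerate the $2^n$ assignments $s\colon\{p_1,\ldots,p_n\}\to\{0,1\}$. A binary probabilistic team is encoded by a vector $(x_s)_s$ of non-negative reals with $x_s$ representing $\X(s)$. By Proposition \ref{lem:alt}, I would work throughout in the non-scaled semantics $\models^{\geq 0}$, which leaves satisfaction of all atoms of $\QPL(\sim,\cpind,\approx)$ invariant under positive rescaling.

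Recursively, for each subformula $\theta$ of $\phi$ over propositions $\tuple q\subseteq\{p_1,\ldots,p_n\}$ I would define a real-arithmetic formula $\theta^*(\tuple x)$ whose free variables are the weights of the current team (one per assignment on $\tuple q$). The cases are: for an atom $p_i$ set $p_i^* := \bigwedge_{s(p_i)=0} x_s=0$ and dually for $\neg p_i$; for a marginal identity set $(\tuple u\approx\tuple v)^* := \bigwedge_{\tuple a}\sum_{s(\tuple u)=\tuple a}x_s=\sum_{s(\tuple v)=\tuple a}x_s$; for a conditional independence $\tuple u\cpind_{\tuple w}\tuple v$ write out the corresponding product equality, which needs $\times$; $\wedge$ and $\sim$ map to first-order $\wedge$ and $\neg$; $\vee$ existentially introduces fresh $y_s,z_s\geq 0$ with $y_s+z_s=x_s$ and recurses on each disjunct using the non-scaled union $\Y\sqcup\Z$; $\exists p$ (for $p$ fresh) introduces $u_{s,0},u_{s,1}\geq 0$ with $u_{s,0}+u_{s,1}=x_s$ and recurses on the extended index set; and $\forall p$ does the same but additionally imposes $u_{s,0}=u_{s,1}$, which together with the sum constraint encodes the uniform split $u_{s,a}=x_s/2$ without ever naming the constant $1/2$. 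The final sentence is $\psi := \exists(x_s)_s\big[\bigwedge_s 0\leq x_s\wedge\bigvee_s\neg(x_s\leq 0)\wedge\phi^*(\tuple x)\big]$, with the disjunction enforcing non-emptiness.

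The main obstacle is keeping the $\QPL(\sim,\approx)$ translation strictly within $\{+,\leq,0\}$. Multiplication could enter in two places: the scalar $k$ in scaled disjunction $\Y\kcup\Z$, and the factor $1/|A|=1/2$ in the universal quantifier clause of Definition \ref{def:probsem}. The first is bypassed by using the non-scaled semantics granted by Proposition \ref{lem:alt}, so disjunction becomes pure addition; the second is handled by expressing the uniform split through the equality $u_{s,0}=u_{s,1}$ rather than as a numerical value. All atom and team-operation translations then use only $+$, $=$, $\leq$ and $0$, which suffices for $(\mathbb{R},+,\leq,0)$. For $\phi\in\QPL(\sim,\cpind)$ the extra vocabulary $\{\times,1\}$ is needed solely because the conditional independence atom is itself multiplicative; correctness of the full translation is then an induction on $\phi$ showing that $\phi^*(\tuple x)$ holds in the reals iff the team with weights $(x_s)$ satisfies $\phi$ in non-scaled semantics.
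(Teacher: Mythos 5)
Your proposal is correct and follows essentially the same route as the paper's proof: encode the team as a vector of non-negative reals indexed by propositional assignments, switch to the non-scaled semantics so that disjunction becomes pure addition, translate each atom and connective compositionally (with the uniform split for $\forall p$ expressed by the equality of the two halves rather than by a multiplicative constant), and enforce non-emptiness in the outer existential sentence. The only cosmetic difference is that you justify the additivity of disjunction by an explicit appeal to Proposition \ref{lem:alt}, which the paper leaves implicit.
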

\begin{proof}
  We show that satisfiability of a formula $\phi \in \QPL(\sim,\cpind)$ is definable in real arithmetic in terms of the non-scaled variant of probabililistic team semantics. 
  For a given tuple $\tuple p = (p_1, \ldots ,p_n)$ of proposition variables, we introduce fresh first-order variables
  $s_{\tuple p=\tuple i}$ for each propositional assignment $s(\tuple p)=\tuple i$,
  where $\tuple i$ is a binary string of length $n$. We write $\tuple s$ to denote the complete tuple of these variables.
 For a $\tuple p$ listing the free variables of $\phi$, we define
  \[\psi:= \exists s_{\tuple p =\tuple 0} \ldots s_{\tuple p =\tuple
    1} \big( \bigwedge_{\tuple i} 0 \leq s_{\tuple p=\tuple i}\ \wedge \neg 0=\sum_{\vec{i}} s_{\tuple p=\tuple i}   \wedge
  \phi^*(\tuple s) \big)
  \]
  where the mapping $\phi(\tuple p)\mapsto \phi^*(\tuple s)$ is
  defined recursively as follows:
  \begin{itemize}
  \item If $\phi(\tuple p)$ is a propositional literal, then
    $\phi^*(\tuple s):= \bigwedge_{s\not\models \phi} s=0$.
  \item If $\phi(\tuple p)$ is
    $\pci{\tuple a}{\tuple b}{\tuple c}$, where
    $\tuple p = \tuple a\tuple b\tuple c \tuple d$ for some
    $\tuple d$, then  $\phi^*(\tuple s)$ is defined as
    \begin{align*}
     \bigwedge_{\tuple i\tuple j \tuple k}  (
      & \sum_{\tuple l'} s_{\tuple a\tuple b\tuple c\tuple d=\tuple i\tuple j\tuple k\tuple l'}
        \times \sum_{\tuple j'\tuple k'\tuple l'} s_{\tuple a\tuple b\tuple c\tuple d=\tuple i\tuple j'\tuple k'\tuple l'}=  
      \sum_{\tuple k'\tuple l'} s_{\tuple a\tuple b\tuple c\tuple d=\tuple i\tuple j\tuple k'\tuple l'}\times
        \sum_{\tuple j'\tuple l'} s_{\tuple a\tuple b\tuple c\tuple d=\tuple i\tuple j'\tuple k\tuple l'}).
    \end{align*}
    
  \item If $\phi(\tuple p)$ is $\tuple a\approx \tuple b$, where     $\tuple p = \tuple a\tuple b\tuple c $ for some $\tuple c$, then
    \begin{align*}
      \phi^*(\tuple s):= \bigwedge_{\tuple i} \sum_{\tuple j'\tuple k'} s_{\tuple a\tuple b\tuple c=\tuple i\tuple j'\tuple k'}= 
      \sum_{\tuple j'\tuple k'} s_{\tuple a\tuple b\tuple c=\tuple j'\tuple i\tuple k'}.
    \end{align*}
    
  \item If $\phi(\tuple p)$ is $\negg \eta(\tuple p)$, then
    $\phi^*(\tuple s):= \neg \eta^*(\tuple s)$.
  \item If $\phi(\tuple p)$ is
    $\eta(\tuple p) \wedge \chi(\tuple p)$, then
    $\phi^*(\tuple s):= \eta^*(\tuple s) \wedge \chi^*(\tuple s)$.
    
  \item If $\phi(\tuple p)$ is
    $\eta(\tuple p) \vee \chi(\tuple p)$, then
    \begin{align*}
      \phi^*(\tuple s):= \exists t_{\tuple p=\tuple 0}r_{\tuple p=\tuple 0}\ldots t_{\tuple p=\tuple 1}r_{\tuple p=\tuple 1}\big(
      &\bigwedge_{\tuple i}(0\leq t_{\tuple p=\tuple i} \wedge 0\leq r_{\tuple p=\tuple i} \wedge\\
      & s_{\tuple p=\tuple i}= t_{\tuple p=\tuple i}+ r_{\tuple p=\tuple i})\wedge 
        \eta^*(\tuple t) \wedge \chi^*(\tuple r)\big).
    \end{align*}
  \item If $\phi(\tuple p)$ is $\exists q \eta(\tuple p,q)$, then
    \begin{align*}
      \phi^*(\tuple s):= \exists t_{\tuple pq=\tuple 00}\ldots t_{\tuple pq=\tuple 11}\big(\bigwedge_{\tuple ij}(0\leq t_{\tuple pq=\tuple ij}   
      \wedge s_{\tuple p=\tuple i}= t_{\tuple p=\tuple i0}+ t_{\tuple p=\tuple i1})\wedge 
      \eta(\tuple t) \big).
    \end{align*}
  \item If $\phi(\tuple p)$ is $\forall y \eta(\tuple p,q)$, then
    \begin{align*}
      \phi^*(\tuple s):= \exists t_{\tuple pq=\tuple 00}\ldots t_{\tuple pq=\tuple 11}\big(&\bigwedge_{\tuple ij}(0\leq t_{\tuple pq=\tuple ij}   
                                                                                             \wedge s_{\tuple p=\tuple i}= t_{\tuple p=\tuple i0}+ t_{\tuple p=\tuple i1}\wedge \\
                                                                                           &t_{\tuple p=\tuple i0}= t_{\tuple p=\tuple i1})\wedge 
                                                                                             \eta(\tuple t) \big).
    \end{align*}
    It is straightforward to check that the claim follows. \qed
  \end{itemize}
\end{proof}

From the translation above we immediately obtain some complexity
bounds for the satisfiability and validity problems of quantified
propositional logics over probabilistic team semantics. We write $\twoEXPSPACE$ for the class of problems solvable in space $O(2^{2^{p(n)}})$, and $\AEXPTIME(f(n))$ ($\AtwoEXPTIME(f(n))$, resp.) for the class of problems solvable by alternating Turing machine in time $O(2^{p(n)})$ ($O(2^{2^{p(n)}})$, resp.) with $f(n)$ many alternations, where $p$ is a polynomial.
\begin{theorem}\label{thm:comp}
  The satisfiability/validity problems of the logics $\QPL(\cpind,\sim)$ and
  $\QPL(\approx,\sim)$ are in $\twoEXPSPACE$ and
  $\AtwoEXPTIME(2^{O(n)})$, respectively. 
\end{theorem}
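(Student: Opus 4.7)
The plan is to combine the translation of Theorem~\ref{thm:arit} with known decision procedures for the two real-arithmetic theories that arise. First I would take a formula $\phi \in \QPL(\sim,\cpind)$ (respectively, $\phi \in \QPL(\sim,\approx)$) and apply the translation from Theorem~\ref{thm:arit} to obtain an equivalent sentence $\psi$ over $(\mathbb{R},+,\times,\leq,0,1)$ (respectively, $(\mathbb{R},+,\leq,0)$). A careful inspection of the recursive translation shows that $\psi$ has size $2^{O(n)}$, where $n$ is the length of $\phi$: each subformula over $k$ free propositions triggers the introduction of fresh first-order variables $s_{\tuple p=\tuple i}$ indexed by the $2^k$ binary assignments, and the atomic clauses $\phi^*(\tuple s)$ produced for propositional literals, for $\pcixyz$, and for $\tuple x\approx \tuple y$ each expand to a conjunction of at most $2^{O(k)}$ equalities. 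The translated sentence is therefore of exponential size in $n$, but its pattern of quantifier alternations only grows linearly with the alternation structure of $\phi$.

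Next I would appeal to the classical complexity of deciding truth in these theories. For the full first-order theory of real closed fields, the decision procedure of Ben-Or, Kozen, and Reif places truth checking in $\EXPSPACE$ in the size of the input sentence. Applied to our $\psi$ of size $2^{O(n)}$, this yields a $\twoEXPSPACE$ upper bound for satisfiability of $\QPL(\sim,\cpind)$. For the additive fragment $(\mathbb{R},+,\leq,0)$, a Ferrante--Rackoff-style quantifier-elimination argument places truth checking in $\AEXPTIME$ with linearly many alternations in the input length. Applied to $\psi$ of size $2^{O(n)}$ (in which the number of quantifier blocks is $O(n)$), this yields the $\AtwoEXPTIME(2^{O(n)})$ upper bound for satisfiability of $\QPL(\sim,\approx)$.

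For validity I would exploit the fact that both logics contain classical negation $\negg$, so $\phi$ is valid iff $\negg\phi$ is unsatisfiable. Since $\twoEXPSPACE$ is closed under complement, and the alternating class $\AtwoEXPTIME(2^{O(n)})$ is closed under complement by flipping existential and universal states of the alternating machine, the same upper bounds transfer to the validity problems.

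The main obstacle will be the careful bookkeeping: one must confirm that the recursive clauses in Theorem~\ref{thm:arit} do not blow up the number of quantifier alternations beyond what the additive real decision procedure can absorb within the $2^{O(n)}$ alternation budget, and that the size bound $2^{O(n)}$ is genuinely met despite the nested $\exists$-blocks introduced by the translations of disjunction, $\exists q$, and $\forall q$. Everything else follows from plugging the size of $\psi$ into the off-the-shelf complexity bounds for $(\mathbb{R},+,\times,\leq,0,1)$ and $(\mathbb{R},+,\leq,0)$.
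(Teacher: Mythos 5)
Your proposal is correct and follows essentially the same route as the paper: translate via Theorem~\ref{thm:arit} into a real-arithmetic sentence of size $2^{O(n)}$, then invoke the $\EXPSPACE$ bound for $\Th(\mathbb{R},+,\times,\leq,0,1)$ and the $\AEXPTIME(n)$ bound for $\Th(\mathbb{R},+,\leq,0)$ to obtain $\twoEXPSPACE$ and $\AtwoEXPTIME(2^{O(n)})$ respectively. Your extra remarks on handling validity by complementation through $\negg$ and on checking the size/alternation bookkeeping are sound and only make explicit what the paper leaves implicit.
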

\begin{proof}
  By the proof of Theorem \ref{thm:arit}, satisfiability and validity
  of quantified propositional formulae can be reduced to truth of a
  real arithmetic sentence of size $2^{O(n)}$. The stated upper bounds
  for $\QPL(\sim,\cpind)$ and $\QPL(\sim,\approx)$ then follow because the
  theory of real-closed fields, $\Th(\mathbb{R},+,\times,\leq,0,1)$,
  is in $\EXPSPACE$ \cite{BENOR1986251}, and the theory of real
  addition, $\Th(\mathbb{R},+,\leq,0)$, is in $\AEXPTIME(n)$
  \cite{BERMAN198071,FerranteR75}. \qed
\end{proof}

We also obtain an upper bound for the implication problem of conditional
independence 
  over binary probability distributions. The \emph{implication problem} for conditional independence is given as a finite set $\Sigma\cup\{\sigma\}$ of conditional independence statements, and the problem is to decide whether all probability distributions that satisfy $\Sigma$ satisfy also $\sigma$. It is a famous
open problem to determine whether implication of conditional
independence is decidable over discrete 
distributions. Since binary probabilistic teams can be interpreted as discrete 
distributions of binary random variables, we obtain that the
implication problem for conditional independence statements is
decidable in exponential space over binary distributions. The result follows since any instance of such an implication problem can be expressed as an existential formula of exponential size (Theorem \ref{thm:arit}), and since the existential theory of real-closed fields is in $\PSPACE$ \cite{Canny:1988}.
\begin{corollary}
  The implication problem for conditional independence 
   over binary probability distributions is in
  $\EXPSPACE$. 
\end{corollary}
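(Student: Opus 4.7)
The plan is to reduce the implication problem directly to the existential theory of the reals via the translation of Theorem~\ref{thm:arit}. Given an instance $\Sigma\cup\{\sigma\}$ over propositional variables $\tuple p$, we form the $\QPL(\sim,\cpind)$ formula
\[
\phi \;:=\; \bigwedge_{\tau\in\Sigma} \tau \;\wedge\; \negg\sigma.
\]
Because a binary probabilistic team on $\tuple p$ is exactly a probability distribution over $\{0,1\}^{|\tuple p|}$, we have that $\Sigma$ implies $\sigma$ over binary distributions if and only if $\phi$ is not satisfiable in the sense of the preceding section. Thus it suffices to decide unsatisfiability of $\phi$ in $\EXPSPACE$, or equivalently (since $\EXPSPACE$ is closed under complement) satisfiability in $\EXPSPACE$.

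Next I would apply the translation $\phi \mapsto \psi$ from the proof of Theorem~\ref{thm:arit}, obtaining a first-order sentence $\psi$ over $(\R,+,\times,\leq,0,1)$ of size $2^{O(n)}$ (where $n=|\Sigma\cup\{\sigma\}|$), such that $\phi$ is satisfiable iff $\R\models\psi$. The key observation is that $\psi$ can be taken to be \emph{purely existential}: the only second-order quantifier in $\phi$ is the outer "satisfying team" quantifier, which becomes the block $\exists s_{\tuple p=\tuple 0}\dots s_{\tuple p=\tuple 1}$; no propositional quantifiers $\exists p$ or $\forall p$ appear in $\phi$ (the CI statements and $\sim\sigma$ are quantifier-free), so no further quantifiers are introduced by the recursive clauses. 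Each positive conditional independence atom translates to a conjunction of polynomial equalities, and the sole occurrence of classical negation, coming from $\negg\sigma$, translates to the negation of such a conjunction, which is equivalent to a disjunction of polynomial \emph{disequalities}. All of these can be placed inside the existential prefix.

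Finally, by Canny's theorem \cite{Canny:1988} the existential theory of real-closed fields is decidable in $\PSPACE$ in the length of the input formula. Since $\psi$ has size $2^{O(n)}$, running Canny's algorithm on $\psi$ uses space $2^{O(n)}$, giving an $\EXPSPACE$ decision procedure for satisfiability of $\phi$ and hence for the implication problem.

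The main obstacle is purely bookkeeping: one must verify that the translation in Theorem~\ref{thm:arit}, when restricted to the fragment actually produced by encoding CI-implication instances (conjunctions of CI atoms together with a single classically negated CI atom), does not introduce any alternations of quantifiers and that the disequalities coming from $\sim\sigma$ can be handled within the existential fragment. Everything else is a direct appeal to Theorem~\ref{thm:arit} and to \cite{Canny:1988}.\qed
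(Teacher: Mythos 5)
Your proposal is correct and follows essentially the same route as the paper: encode $\Sigma\cup\{\sim\sigma\}$ as a single $\QPL(\sim,\cpind)$ formula, observe that the translation of Theorem~\ref{thm:arit} yields a purely existential sentence of size $2^{O(n)}$ over $(\mathbb{R},+,\times,\leq,0,1)$, and invoke the $\PSPACE$ bound for the existential theory of real-closed fields \cite{Canny:1988}. The only difference is that you spell out the bookkeeping (no quantifier alternations arise, the disequalities from $\negg\sigma$ stay in the existential fragment, $\EXPSPACE$ is closed under complement) that the paper leaves implicit.
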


It may be conjectured that the obtained complexity bounds are not
optimal. The first-order translations provide only access to a very
restricted type of arithmetic expressions. For instance, real
multiplication is only available between sums of reals from the unit
interval. We leave it as an open problem to determine whether the
results of this section can be optimized using more refined arguments.

\section{Conclusions and further directions}

\begin{table}[t]
\centering
\begin{tabular}{ll}
\toprule
PTS:&$\FO(\approx)< \FO(\approx,\dep(\cdot))\equiv  \FO(\approx^*) \leq \FO(\pind)\equiv \FO(\cpind)$\\
TS:&$\FO(\sub)< \FO(\sub, \dep(\cdot))\equiv \FO(\bot)\equiv \FO(\bot_{\rm c})$ \cite{galliani12,GallianiV14}\\
\bottomrule\\
\end{tabular}
\caption{Relative expressivity in probabilistic team semantics (PTS) and team semantics (TS)}\label{tbl:eka}
\end{table}

We have studied probabilistic team semantics in association with three notions of dependency atoms: probabilistic independence, marginal identity, and marginal distribution equivalence atoms. Our investigations give rise to an overall classification that is already familiar from the team semantics context (see Table \ref{tbl:eka}). Similar to inclusion logic ($\FO(\sub)$) in team semantics, we observed that $\FO(\approx)$ enjoys a union closure property which renders it strictly less expressive than $\FO(\approx,\dep(\cdot))$. A further analogous fact is that both dependence and marginal identity are definable with conditional independence, which in turn is definable using only marginal independence. An interesting open question is to determine the relationship between $\FO(\approx,\dep(\cdot))$  (or equivalently  $\FO(\approx^*)$) and $\FO(\cpind)$. Contrary to the picture arising from team semantics, we conjecture that the latter is strictly more expressive.

 One motivation behind our marginal distribution equivalence atom was that
it seemed to be weaker than marginal identity but still enough to
guarantee the same entropy of two distributions. A natural next step would
be to consider some form of entropy atom/atoms and study the expressive
power of the resulting logics. The exact formulation of such atoms will
make all the difference, as one can detect both functional dependencies
and marginal independence if one has full access to the conditional
entropy as a function.

We also studied (quantified) propositional logics with probabilistic team semantics. By connecting real-valued probabilistic teams to real arithmetic we showed upper bounds for computational problems associated with these logics. 
 As a consequence of our translation to real arithmetic we also obtained an $\EXPSPACE$ upper bound for the implication problem of conditional independence statements over binary distributions.

%
%
%
\bibliographystyle{splncs04} \bibliography{biblio}

\appendix

\section{Proof of Proposition \ref{lem:alt}}\label{a:prop1}

\repeatproposition{prop}
\begin{proof}  The cases for first-order literals, $\approx$,
  $\approx^*$, $\dep(\cdot)$ and the conjunction are immediate. The claim
  for the independence atom $\pcixyz$ follows from the equivalence
  below together with the observation that the former is the
  definition of the atom in the unscaled team $\X$ whereas the latter
  is equivalent to that of the scaled team $\frac{1}{|\X|} \cdot \X$.
  \begin{multline*}
    \lvert {\X}_{\tuple x\tuple y= s(\tuple x\tuple y)} \rvert \cdot
    \lvert {\X}_{\tuple x\tuple z=s(\tuple x\tuple z)} \rvert = \lvert
    {\X}_{\tuple x\tuple y\tuple z=s(\tuple x\tuple y\tuple z)} \rvert
    \cdot \lvert {\X}_{\tuple x=s(\tuple x)} \rvert
    \text{ if and only if }
    \\
    \frac{1}{|\X|} \cdot \lvert {\X}_{\tuple x\tuple y= s(\tuple
      x\tuple y)} \rvert \cdot \frac{1}{|\X|} \cdot \lvert
    {\X}_{\tuple x\tuple z=s(\tuple x\tuple z)} \rvert =
    \frac{1}{|\X|} \cdot \lvert {\X}_{\tuple x\tuple y\tuple
      z=s(\tuple x\tuple y\tuple z)} \rvert \cdot \frac{1}{|\X|} \cdot
    \lvert {\X}_{\tuple x=s(\tuple x)} \rvert.
  \end{multline*}
  The case for disjuction follows from the following chain of
  equivalences
  \begin{align*}
    \A \modelsr_\X \phi \lor \psi &\Leftrightarrow \A \modelsr_\Y \phi \text{ and } \A \modelsr_\Z \psi \text{ for some $\Y$ and $\Z$ s.t. $\Y\sqcup \Z = \X$}\\
                                  &\Leftrightarrow \A \modelsd_{\frac{1}{\lvert \Y \rvert} \cdot \Y} \phi 
                                    \text{ and } \A \modelsd_{\frac{1}{\lvert \Z \rvert} \cdot \Z} \psi \text{ for some $\Y$ and $\Z$ s.t. $\Y\sqcup \Z = \X$}\\
                                  &\Leftrightarrow \A \modelsd_{\frac{1}{\lvert \X \rvert} \cdot \X} \phi \lor \psi,
  \end{align*}
  where the last equivalence follows form the definition of the
  disjunction for $k=\frac{\lvert \Y \rvert}{\lvert \X\rvert}$ and
  $1-k = \frac{\lvert \Z \rvert}{\lvert \X\rvert}$, since
  \begin{align*}
    \frac{\lvert \Y \rvert}{\lvert \X\rvert} \cdot \frac{1}{\lvert \Y\rvert} \cdot \Y 
    + \frac{\lvert \Z \rvert}{\lvert \X\rvert} \cdot \frac{1}{\lvert \Z\rvert} \cdot \Z 
    = \frac{1}{\lvert \X\rvert} \cdot \Y + \frac{1}{\lvert \X\rvert} \cdot \Z =  \frac{1}{\lvert \X\rvert} \cdot \X.
  \end{align*}
  The cases for the quantifiers are similar; we show the case for the
  universal quantifier
  \begin{align*}
    \A \modelsr_\X \forall x \phi \Leftrightarrow \A \modelsr_{\X[A/x]} \phi 
    \Leftrightarrow \A \modelsd_{\frac{1}{\vert \X[A/x] \rvert} \cdot \X[A/x]} \phi 
    & \Leftrightarrow \A \modelsd_{(\frac{1}{\vert \X \rvert} \cdot\X) [A/x]} \phi \\
    & \Leftrightarrow  \A \modelsd_{\frac{1}{\vert \X \rvert} \cdot \X} \forall x \phi,
  \end{align*}
  where the second last equivalence follows, since
  $\vert \X[A/x] \rvert = \lvert \X \rvert$ and
  $(\frac{1}{\vert \X \rvert} \cdot\X) [A/x]= \frac{1}{\vert \X
    \rvert} \cdot\X [A/x]$.\qed
\end{proof}

\section{Proof of Lemma \ref{lem:One}}\label{a:lemma}

\repeatlemma{lem}

\begin{proof}
  We will write a formula $\psi(\x,\y)$ which is to be equivalent with
  $\x\approx \y$.  But first we need to define an auxiliary formula~$\theta$.
Define
  $$\theta := (d=c_1\land \z=\x) \lor (d=c_2\land \z=\y).$$
  This formula says that $\z$ always equals either $\x$ or $\y$ and
  $d$ is a ``detector'' for which one it is.
  We use the abbreviation $\exists^c c_1 c_2$ below to denote
  $\exists c_1 \exists c_2 (\dep(c_1)\land \dep(c_2) \land c_1\neq c_2)$.
  %
  Now define
  $$\psi(\x,\y) := \exists^c c_1 c_2 
  \Bigg[
      \forall \z\exists d
      \Big(
          (\x=\y)
          \lor 
          \big[
              (\x\ne \y) \land 
              \big(
                 (\z\ne \x\land \z\ne \y) \lor [\theta \land \f]
              \big)
          \big] 
      \Big) 
   \Bigg].$$

   Suppose $\x\approx \y$ holds in a team $\X$ over variables $\x$ and
   $\y$. We want to show that $\psi(\x,\y)$ is satisfied by~$\X$. 
   Let $\X_1$ be the expansion of $\X$ obtained by the quantification of $c_1$, $c_2$, and  $\z$.
  We may assume that $c_1$, $c_2$ were picked such that they
   attain constant but distinct values. Also note that $\z$
   is independent of all other variables and uniformly distributed over the domain of~$\A$.
   Now let $d$ be a variable that takes its values from the values of $c_1$ and $c_2$
  such that it ``detects''
   whether $\z$ equals $\x$ or not (value of $d$ is the value of
   $c_1$ iff $\z$ and $\x$ have the same value). Let $\X_2$ be the
   expansion of $\X_1$ by this $d$. We need to check that $\X_2$
   satisfies
   $$(\x=\y)
     \lor 
     \big[
         (\x\ne \y)\land 
         \big(
            (\z\ne \x\land \z\ne \y)\lor [\theta \land \f]
         \big)
     \big].$$
  Let $\X_3$ be the maximal subteam of $\X_2$ where $\x\ne \y$. So now we
  have to check that
  \begin{equation}\label{eq:1}
  (\z\ne \x\land \z\ne \y)\lor [\theta \land \f]
  \end{equation}
  holds in $\X_3$. Recall that $\theta$ says in particular that $\z$
  equals either $\x$ or $\y$, so \eqref{eq:1} holds in $\X_3$ if and only if
  $\theta\land\f$ holds in the maximal subteam $\X_4$ of $\X_3$ in
  which this is the case.  We also just defined $d$ to attain the
  value $c_1$ if and only if $\z=\x$ and the only other option is that
  $\z=\y$ in which case $d=c_2$, so
  $\theta$ is satisfied.  What about
  $\f$; note that $\X_4$ is such that \eqref{eq:0} holds. Now fix any value $\tuple v$ of $\z$ in
  $\X_4$. Since $\x\approx \y$ holds, we have
  $|\X_{\x=\tuple v}|=|\X_{\y=\tuple v}|$. When we expand $\X$ to
  $\X_1$ and further to $\X_2$ this property is (clearly)
  preserved. It is also preserved when we take the subteam $\X_3$,
  because when we move from $\X_2$ to $\X_3$, we only remove
  assignments $s$ where $s(\x)=s(\y)$, so if an assignment with
  $\x=\tuple v$ is deleted, then also an assignment with
  $\tuple y=\tuple v$ is deleted (the same assignment). When we move
  to $\X_4$ we still have $|(\X_4)_{\x=\tuple v}|=|(\X_4)_{\y=\tuple v}|$ which
  follows from the fact that $\z$ is independent of
  $\x,\y,c_1,c_2$. Therefore
  $$|(\X_4)_{\x\z=\tuple v\tuple v)}|=|(\X_4)_{\y\z)=\tuple v\tuple v}|.$$
  But this means that conditioned on $\z=\tuple v$, $d$ is
  uniformly distributed in~$\X_4$. Since this holds for any $\tuple v$, $d$ is
  uniformly distributed and independent of~$\z$ as desired and $\psi(\x,\y)$
  is satisfied by~$\X$.

  Suppose now that a team $\X$ satisfies $\psi(\x,\y)$. We want to show
  that $\x\approx \y$.  But the chain of reasoning above also works
  ``backwards''. Fix a value $\tuple v$ of $\x$. We want to show that
  $|\X_{\x=\tuple v}|=|\X_{\y=\tuple v}|$.  It is clear that it is sufficient to look
  at $\X_3$ as defined above.  But because $\theta$ says that $d$ is
  a ``detector'' of whether $\z=\x$ or not, it is in fact sufficient
  to check $\x\approx \y$ for the subteam $\X_4$ (also as defined
  above).  But in $\X_4$, this follows from~$\f$.\qed
\end{proof}

\section{Proof of Theorem \ref{thm:marg}}\label{a:thm}

Theorem \ref{thm:marg} follows from Lemma \ref{apulemma} presented below. Lemma \ref{apulemma} can be proven following the proof of Theorem 2 in \cite{HKMV18}. We omit the details and instead delineate intuition behind the translation. The idea is to simulate the semantics of the probabilistic conditional independence atom using only marginal independence and marginal identity atoms. First, the universally quantified $\tuple y$ in the translation represents all possible variable assignments $s$ of $\tuple x$. Second, $\psi_0$ and $\psi_1$ indicate that the marginal distributions of $\tuple x_0$, $\tuple x_0\tuple x_1$, $\tuple x_0\tuple x_2$, and $\tuple x_0\tuple x_1\tuple x_2$ are distributed respectively to $\tuple z_0,\tuple z_1,\tuple z_2,\tuple z_3$ independently of $\tuple y$ and of each other. Third, $\psi_2$ encodes the product of the weights of $s(\tuple x_0)$ and $s(\tuple x_0\tuple x_1\tuple x_2)$ by $\a=0$, and $\psi_3$ similarly the product of the weights  of $s(\tuple x_0\tuple x_1)$ and $s(\tuple x_0\tuple x_2)$ by $\b=0$. Finally, conditional independence between $\tuple x_1$ and $\tuple x_2$ given $\tuple x_0$ follows iff these products are equal relative to all assignments of $\tuple y$. Theorem \ref{thm:marg} then follows from this lemma since the constant $0$ and the marginal identity atom are both definable in $\FO(\pind)$.
\begin{lemma}\label{apulemma}
Let $\tuple x_0, \tuple x_1, \tuple x_2$ be three sequences of variables from $\tuple x=(x_1, \ldots ,x_n)$, and let $0$ be a constant symbol. Then $\pci{\tuple x_0}{\tuple x_1}{\tuple x_2}$ is equivalent to 
\[\phi:=\forall\tuple y\exists \tuple z_0\tuple z_1\tuple z_2 \tuple z_3\a\b(\psi_0\wedge \psi_1\wedge \psi_2\wedge \psi_3\wedge \psi_4)\]
where
\begin{align*}
&\psi_0:=\pmi{\tuple y}{\tuple  z_0}\wedge \pmi{\tuple y\tuple  z_0}{\tuple  z_1}\wedge \pmi{\tuple y\tuple  z_0\tuple  z_1}{\tuple  z_2}\wedge \pmi{\tuple y\tuple  z_0\tuple  z_1\tuple  z_2}{\tuple  z_3},\\
&\psi_1:=\tuple x_0 \approx \tuple z_0 \wedge \tuple x_0\tuple x_1 \approx \tuple z_1 \wedge \tuple x_0\tuple x_2 \approx \tuple z_2 \wedge \tuple x_0 \tuple x_1\tuple x_2\approx \tuple z_3, \\
&\psi_2:= \a=0\leftrightarrow (\tuple z_0=\tuple y_0 \wedge \tuple z_3 = \tuple y_0\tuple y_1 \tuple y_2),\\
&\psi_3:= \b =0\leftrightarrow (\tuple z_1=\tuple y_0\tuple y_1 \wedge \tuple z_2 = \tuple y_0 \tuple y_2),\\
&\psi_4:= \tuple y\a \approx \tuple y \b.
\end{align*}
\end{lemma}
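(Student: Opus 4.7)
The plan is to verify the equivalence by direct semantic analysis of the quantifier prefix, following the template of the proof of Theorem 2 in \cite{HKMV18}. Split $\tuple y$ as $\tuple y_0\tuple y_1\tuple y_2$ with $|\tuple y_i|=|\tuple x_i|$, so that $\tuple z_0,\tuple z_1,\tuple z_2,\tuple z_3$ have arities matching $\tuple x_0,\tuple x_0\tuple x_1,\tuple x_0\tuple x_2,\tuple x_0\tuple x_1\tuple x_2$ respectively. First, unpack the universal quantifier: $\forall\tuple y$ produces the team $\X[A^{|\tuple y|}/\tuple y]$, in which $\tuple y$ is uniform on $A^{|\tuple y|}$ and independent of $\tuple x$. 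Next, observe that the chain of marginal independencies in $\psi_0$ forces $\tuple y,\tuple z_0,\tuple z_1,\tuple z_2,\tuple z_3$ to be \emph{jointly} mutually independent, while $\psi_1$ pins the marginal distribution of each $\tuple z_i$ to that of the corresponding projection of $\tuple x$. Thus, up to the value taken by the existential witnesses, the extended team $\X^*$ has joint distribution equal to the product of a uniform factor on $\tuple y$ with the four prescribed marginals on the $\tuple z_i$'s.

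For the direction $\Rightarrow$, suppose $\A\models_\X \pci{\tuple x_0}{\tuple x_1}{\tuple x_2}$. I would witness the existential quantifiers by a team $\X^*$ in which the joint distribution of $(\tuple y,\tuple z_0,\tuple z_1,\tuple z_2,\tuple z_3)$ is exactly the product described above (which exists because any collection of distributions on separate variable tuples can be realized as a product via a sequence of independent $\exists$-steps in probabilistic team semantics). The biconditionals $\psi_2$ and $\psi_3$ are then forced: the witness $\a$ takes value $0$ on precisely those assignments satisfying $\tuple z_0=\tuple y_0 \wedge \tuple z_3=\tuple y_0\tuple y_1\tuple y_2$, and analogously for $\b$ and $\psi_3$. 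Using mutual independence of $\tuple z_0$ and $\tuple z_3$, the weight of $\{\tuple y=\tuple v,\,\a=0\}$ in $\X^*$ computes to $\frac{1}{|A|^{|\tuple y|}}\cdot|\X_{\tuple x_0=\tuple v_0}|\cdot|\X_{\tuple x_0\tuple x_1\tuple x_2=\tuple v_0\tuple v_1\tuple v_2}|$, while that of $\{\tuple y=\tuple v,\,\b=0\}$ equals $\frac{1}{|A|^{|\tuple y|}}\cdot|\X_{\tuple x_0\tuple x_1=\tuple v_0\tuple v_1}|\cdot|\X_{\tuple x_0\tuple x_2=\tuple v_0\tuple v_2}|$. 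The conditional independence hypothesis yields equality of these two quantities for every $\tuple v$, which is exactly $\psi_4$.

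For the converse direction, assume $\A\models_\X\phi$, fix any tuple $\tuple v=\tuple v_0\tuple v_1\tuple v_2\in A^{|\tuple y|}$, and reverse the calculation: since $\psi_0$ and $\psi_1$ still determine the marginal weights of $\tuple z_0=\tuple v_0$ and $\tuple z_3=\tuple v_0\tuple v_1\tuple v_2$ (and analogously for $\tuple z_1,\tuple z_2$) as proportional to the corresponding marginals of $\X$, and since $\psi_2,\psi_3$ pick out the right subteams, the identity demanded by $\psi_4$ collapses exactly to the defining product equality of $\pci{\tuple x_0}{\tuple x_1}{\tuple x_2}$ at $s(\tuple x_0\tuple x_1\tuple x_2)=\tuple v$.

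The main technical obstacle is the bookkeeping in the two weight computations: one has to verify carefully that the biconditionals in $\psi_2$ and $\psi_3$ indeed partition the extended team into the intended subteams (this relies on both the marginal identities of $\psi_1$ and the mutual independence from $\psi_0$) and that the uniform factor $\frac{1}{|A|^{|\tuple y|}}$ cancels symmetrically on both sides of $\psi_4$. Once this is set up for a single $\tuple v$, quantifier locality and the fact that $\tuple v$ was arbitrary finish the argument.
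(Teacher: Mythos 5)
Your proposal is correct and follows essentially the same route as the paper, which itself only sketches this argument (deferring to the proof of Theorem 2 in \cite{HKMV18}): the chain of marginal independencies in $\psi_0$ together with the marginal identities in $\psi_1$ make the weights of the events singled out by $\psi_2$ and $\psi_3$ equal to the two products occurring in the definition of $\pci{\tuple x_0}{\tuple x_1}{\tuple x_2}$, and $\psi_4$ equates them for every value of $\tuple y$. One small point to tighten in the forward direction: $\psi_4$ constrains the weights for \emph{all} values of $\a$ and $\b$, not only the value $0$, so you should additionally fix a single non-zero value for $\a$ and $\b$ on the complements of the events in $\psi_2$ and $\psi_3$ (the complement weights then agree automatically once the weights at $0$ do).
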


\section{Proof of Proposition  \ref{prop:uc}}\label{a:prop2}

\repeatproposition{prop:ucl}
\begin{proof}
  We may assume that $\X=(X,f)$ and $\Y=(X,g)$. We prove the claim by
  structural induction on $\phi$. We omit the cases for atomic
  formulae and conjunction which are straightforward.
  \begin{itemize}
  \item Assume that $\phi = \phi_0\vee \phi_1$. By the semantics of
    the disjunction, we find $p,q\in[0,1]$ and distributions
    $f_0,f_1,g_0,g_1$ over $X$ such that $\A\models_{(X,f_0)} \phi_0$,
    $\A\models_{(X,f_1)} \phi_1$, $\A\models_{(X,g_0)} \phi_0$,
    $\A\models_{(X,g_1)} \phi_1$, $f= pf_0 + (1-p)f_1$, and
    $ g= qg_0 + (1-q)g_1$. Define
    $h_0:=\frac{kpf_0+ (1-k)qg_0}{kp+(1-k)q}$ and
    $h_1:=\frac{k(1-p)f_1+ (1-k)(1-q)g_1}{k(1-p)+ (1-k)(1-q)}$. By the
    induction hypothesis $\A\models_{(X,h_0)} \phi_0$ and
    $\A\models_{(X,h_1)} \phi_1$, since
    $(X,h_0)= (X,f_0)\qcup{a} (X,g_0)$ for $a:= \frac{kp}{kp+(1-k)q}$,
    and $(X,h_1)= (X,f_1)\qcup{b} (X,g_1)$ for
    $b:= \frac{k(1-p)}{k(1-p)+ (1-k)(1-q)}$.  Then
    $(X,f)\kcup (X,g)=(X,h_0)\qcup{c} (X,h_1)$ for $c:=kp+ (1-k)q$
    because
    \begin{align*}
      ch_0+(1-c)h_1&= c\frac{kpf_0+ (1-k)qg_0}{c} + (1-c)\frac{k(1-p)f_1+ (1-k)(1-q)g_1}{1-c}\\
                   &=k[pf_0 + (1-p)f_1] + (1-k)[qg_0 + (1-q)g_1]\\ 
                   &= kf + (1-k)g.
    \end{align*} 
    Consequently, $\A \models_{(X,f)\kcup (X,g)} \phi_0 \vee\phi_1$ follows from the semantics of the disjuction which completes the disjunction step of the induction.
  \item Assume that $\phi = \forall x \psi$. Then
    $\A \models_{\X[A/x]} \psi$ and $\A \models_{\Y[A/x]} \psi$, and
    by induction assumption $\A \models_{\X[A/x] \kcup \Y[A/x]} \psi$.
    The claim then follows since
    $\X[A/x] \kcup \Y[A/x]= (\X \kcup \Y)[A/x]$.

  \item Assume that $\phi = \exists x \psi$. Then
    $\A \models_{\X[F/x]} \psi$ and $\A \models_{\Y[G/x]} \psi$ where
    $F$ and $G$ are functions that map each $s\in X$ to a probability
    distribution $F_s$ over $A=\Dom(\A)$. We let $H$ be a function
    that maps $s \in X$ to a probability distribution $H_s$ over $A$
    such that
    \[H_s(a):= \frac{kf(s)F_s(a)+(1-k)g(s)G_s(a)}{kf(s)+(1-k)g(s)}.\]
    Note that $\sum_{a\in A} H_s(a) = 1$ follows from
    $\sum_{a\in A} F_s(a)=\sum_{a\in A} G_s(a)= 1$. By induction
    assumption $\A \models_{\X[F/x]\kcup \Y[F/x]} \psi$. The claim now
    follows from $\X[F/x]\kcup \Y[F/x] = (\X\kcup \Y)[H/x]$, which
    holds since for all $a\in A$:
    \[kf(s)F_s(a) + (1-k)g(s)G_s(a) = [kf(s) + (1-k)g(s)]H_s(a).\]
    This concludes the case of existential quantification and the
    proof.\qed
  \end{itemize}
\end{proof}

\end{document}